\newcommand\cupdot{\mathrel{\ooalign{$\cup$\cr\hidewidth$\cdot$\hidewidth\cr}}}
\newcommand\polylog{\mathop{\textup{polylog}}}
\newcommand\eqdef{\mathrel{\overset{\makebox[0pt]{\mbox{\normalfont\tiny\sffamily def}}}{=}}}
\DeclarePairedDelimiter{\ceil}{\lceil}{\rceil}
\def\squareforqed{\leavevmode\hbox to.77778em{\hfil\vrule\vbox to.675em{\hrule width.6em\vfil\hrule}\vrule\hfil}} 
\spnewtheorem{open}{Open problem}{\bfseries}{\itshape}
\spnewtheorem{problem2}{Problem}{\bfseries}{}
\spnewtheorem{observation}{Observation}{\bfseries}{\itshape}
\title{Theoretical Model of Computation and Algorithms for FPGA-based Hardware Accelerators\thanks{This work was carried out while the authors were participants in the DIMATIA-DIMACS REU exchange program at Rutgers University.\hfill\break The work was supported by the grant SVV–2017–260452.}}
\author{Martin Hora\inst{1}
   \and V\'{a}clav Kon\v{c}ick\'{y}\inst{2}
   \and Jakub T\v{e}tek\inst{2}
}
\authorrunning{M. Hora \and V. Kon\v{c}ick\'{y} \and J. T\v{e}tek}
\institute{Computer Science Institute, Charles University,\\ Malostransk\'{e} n\'{a}m. 25, CZ -- 11800 Prague, Czech Republic\\
           \email{mhora@iuuk.mff.cuni.cz}
      \and Department of Applied Mathematics, Charles University,\\ Malostransk\'{e} n\'{a}m. 25, CZ -- 11800 Prague, Czech Republic\\
           \email{\{koncicky,jtetek\}@kam.mff.cuni.cz}
}
\begin{document}

\maketitle

\begin{abstract}
While FPGAs have been used extensively as hardware accelerators in industrial computation \cite{cit:uses_FPGA}, no theoretical model of computation has been devised for the study of FPGA-based accelerators. In this paper, we present a theoretical model of computation on a system with conventional CPU and an FPGA, based on word-RAM. We show several algorithms in this model which are asymptotically faster than their word-RAM counterparts. Specifically, we show an algorithm for sorting, evaluation of associative operation and general techniques for speeding up some recursive algorithms and some dynamic programs. We also derive lower bounds on the running times needed to solve some problems.
\end{abstract}

\section{Introduction}
While Moore's law has dictated the increase in processing power of processors for several decades, the current technology is hitting its limits and the single core performance is largely staying the same. A new paradigm is needed to speed up computation. Many alternative architectures have been made, such as multicore, manycore (specifically GPU) and FPGA, which is the one we consider in this paper. FPGA (Field-programmable gate array) is a special hardware for efficient execution of boolean circuits. While there exists a theoretical model which allows theoretical treatment of algorithms in the multicore and manycore model \cite{cit:model_GPU}, no such model has been devised for computation with FPGA-based accelerators, despite their rising popularity. In this paper, we consider heterogeneous systems which have a conventional CPU and an FPGA. This is especially relevant since Intel has recently introduced combined CPU with an FPGA \cite{cit:intel_combined_xeon}, which should lower the latency between CPU and FPGA. We define a theoretical model for such systems and show several asymptotically fast algorithms on this model solving some fundamental problems. 

\subsection{Previous work}
While there is no theory of FPGA algorithms, there are experimental results which show that FPGA can be used to speed up the computation by one to two orders of magnitude and decrease the power consumption by up to three orders of magnitude. \citet{cit:bio_paper} and \citet{cit:bio_thesis} show how FPGA can be used to speed up computations in bioinformatics. FPGAs have also been used to speed up genetic algorithms, as shown by \citet{cit:genet_alg}. In the following paper, the authors discuss what causes the performance gains of using an FPGA \cite{cit:why_faster}. \citet{cit:compare_GPU} and \citet{cit:GPU_compare_sort} showed a comparison with GPU and multicore CPU. Several algorithms have been implemented on FPGA and benchmarked, including matrix inversion \cite{cit:FPGA_matrix_inv}, AES \cite{cit:FPGA_AES} and k-means \cite{cit:FPGA_k-means}.

\subsection{Our contributions}
The main contribution of this paper is the definition of a model of computation which captures computation with FPGA-based accelerators (Section~\ref{section:model}). In this model, we speed up classical algorithms for some fundamental problems, specifically sorting (Section~\ref{section:sorting}), evaluation of associative operation (Section~\ref{section:assoc_op}), speed up some dynamic programming problems including longest common subsequence and edit distance (Section~\ref{section:dynprog}) and speed up some recursive algorithms including Strassen algorithm, Karatsuba algorithm as well as many state-of-the-art algorithms for solving \textsf{NP}-hard problems (Section~\ref{section:recursive}). In Section~\ref{section:lower_bounds} we show some lower bounds which follow from a relation to cache-aware algorithms. 

\section{Preliminaries}

\subsection{Boolean Circuits}
\begin{definition}[Boolean circuit]
A boolean circuit with $n$ inputs and $m$ outputs is a labeled directed acyclic graph. It has $n$ vertices $x_1, \cdots, x_n$ with no incoming edges, which we call the input nodes, and $m$ vertices $y_1, \cdots, y_m$ with no outgoing edges, which we call output nodes. We call the vertices $x_1, \cdots x_n, y_1, \cdots,\allowbreak y_m$ the I/O nodes. Each non-input vertex in this graph corresponds to a boolean function from the set $\{\vee, \wedge, \neg, \text{id}\}$ and is called a gate. All gate vertices have indegree equal to the arity of their boolean function.

The circuit computes a function $\{0,1\}^n \to \{0,1\}^m$ by evaluating this graph. At the beginning of computation, an input node $x_i$ is set to the $i$-th bit of input. In each step of computation, gates which received their input calculate the function and send their output to the connected vertices. When all output nodes receive their value, the circuit outputs the result with the value of $y_i$ being the $i$-th bit of output.
\end{definition}

The order of a circuit is the number of its input nodes. Note that in section \ref{section:recursive} we introduce a different notion of order of a circuit, which does not equal the number of circuit's input nodes.

We consider two complexity measures of circuits -- \emph{circuit size}, defined as the number of gates, and \emph{depth}, defined as the length of the longest path from an input node to an output node.

We denote the size and depth of a circuit $C$ by $\text{size}(C)$ and $\text{del}(C)$, respectively. Size complexity and depth of a family of circuits $\mathcal{C}$ is denoted by $\text{size}_{\mathcal{C}}(n)$ and $\text{del}_{\mathcal{C}}(n)$, respectively, where $n$ is the order of the circuit.

\begin{definition}[Synchronous circuit]
A circuit is synchronous if for every gate $g$, all paths from input nodes to $g$ have the same length and all output nodes are in the same depth.
\end{definition}

While any circuit can be made synchronous by adding intermediate identity gates, this can asymptotically increase the number of gates used. Note that our notion of synchronous circuits is stronger than the one introduced by \citet{cit:synch_circ}. The original definition does not require the output gates to be in the same depth.

\medskip
\noindent For more detailed treatment of this topic, refer to \cite{cit:circuit_book}.

\subsection{Field Programmable Gate Arrays} \label{section:fpga}
Field Programmable Gate Array, abbreviated as FPGA, is hardware which can efficiently execute boolean circuits. It has programmable gates, which can be programmed to perform a specific function. The gates can then be connected via interconnects into the desired circuit. Usual FPGA consists of a grid of gates with interconnects between them. The architecture of FPGAs only allows for the execution of synchronous circuits, but multiple synchronous circuits of different depths can be realised on an FPGA. There are techniques for delaying signals which can be used for execution of asynchronous circuits. However, only limited number of signals can be delayed. For this reason, we limit the model to synchronous circuits.

Let $G$ denote the number of gates of the FPGA. Then we assume that any set of circuits with a total of at most $G$ gates can be realised by the interconnects. This assumption tends to be mostly correct in practice, at least asymptotically. The FPGA communicates with other hardware using I/O pins, which are usually placed at the edges of the FPGA. Let $I$ denote the number of I/O pins. It usually holds that $I \approx \sqrt{G}$ since the I/O pins are placed at the edges of the chip.

In the theoretical model, we count execution of a circuit on FPGA as a single operation, not accounting for the memory transfers. This is to capture the increased throughput of FPGA compared to a conventional CPU. Moreover, the speed of RAM is on the same order of magnitude as the speed with which the data can be read by the FPGA. For example, Intel claims that their new FPGA embedded on CPU has I/O throughput of 160Gbs \cite{cit:intel_combined_xeon}, while modern RAM memories have both reading and writing speed at least 160Gbs (for dual-channel) and up to 480Gbs (for quad-channel) \cite{cit:RAM_speed}.

The computation of the FPGA takes place in discrete time-steps and can be pipelined. Each time-step, output signals of layer $i$ of the circuit go to the layer $i+1$, while layer $i$ receives and processes the output signals of layer $i-1$. This means that in each time-step, we can give the FPGA an input and we get the output delayed by a number of steps proportional to the depth of the circuit. If we run the FPGA at least as many times as is the depth of the circuit, the amortized time spent per input is, therefore, $\mathcal{O}(1)$.

Reprogramming an FPGA can take up a significant amount of time. For this reason, we require in the model that the circuits used in an algorithm are constructed beforehand and stay fixed during the computation.

\section{Model of Computation} \label{section:model}

Throughout this paper, we use the following model of computation. We call the model Pipelining Circuit RAM, abbreviated as PCRAM. The computer consists of word-RAM of word-size $w$ together with a circuit execution module with $G$ circuit gates and $I$ circuit inputs/outputs. A program in the PCRAM model consists of two parts. First is a program in the word-RAM running in time polynomial in $G$ and $I$, taking no input and outputting a sequence of circuits $\mathcal{C} = C_1, \cdots, C_m$ for some $m$, in a standard adjacency list representation (i.e. for each gate, the incoming edges are specified) with a total of at most $G$ gates and $I$ I/O nodes. Since the algorithm generating the circuits does not take any input except $G$ and $I$, the sequence $\mathcal{C}$ only depends on this parameters. The time needed for generating $\mathcal{C}$ does not count towards the time complexity of the algorithm. After the end of the first phase, the sequence of circuits $\mathcal{C}$ cannot be changed. Second part is the main program, working in a modified word-RAM:

At the beginning of execution of the main program, the values $G$ and $I$ are stored in memory. We always assume that $G \ge I$. Additionally, one might assume that $G \in \Theta(I^2)$ because it is usually the case in practice as explained in Section~\ref{section:fpga}. This assumption can greatly simplify sharing of resources, as is described below.

The word-RAM has an additional instruction $\textsc{RunCircuit}(i, s, t)$ that starts computation of circuit $C_i$ on specified input. The instruction has three parameters. Parameter $i \in [m]$ identifies which circuit should be run. Parameter $s$ is the source address of the input data. The address $s$ has to be aligned to whole words. If the circuit has $l$ inputs then the contiguous block of $l$ bits starting from address $s$ is used as input for the circuit. Similarly, $t$ is the address where the output of the circuit is stored after the circuit computes its output. The output is also stored as a contiguous block of bits and, as with the input, the address $t$ has to be aligned to whole words. However, note that variable shift operation by at most $w$ bits can be easily implemented in depth $\mathcal{O}(\log \log w)$, making it possible to use unaligned inputs and outputs.
Instruction \textsc{RunCircuit} lasts one time-step and it only starts execution of the circuit and does not wait for the results. 
The computation time of a circuit is proportional to its depth. If we are evaluating instruction $\textsc{RunCircuit}(i, s, t)$ then we get the output starting at address $t$ exactly $\text{del}(C_i)$ steps after we executed the instruction. This can result in concurrent writes. While concurrent writes are allowed, the resulting value is undefined and the memory cells which are in the intersection of the writes can have any value.
The model allows for use of the \textsc{RunCircuit} instruction without waiting for the results of the previous \textsc{RunCircuit} instruction. This is called pipelining and it is a critical feature of the PCRAM model which makes it possible to speed up many algorithms.

We also consider a randomized version of PCRAM with an extra operation which assigns to a memory cell a value chosen uniformly at random independently from other calls of the operation.

As in the case of (word) RAM, we measure the time complexity by the number of operations executed in a worst case. We express the complexity of an algorithm in terms of the input size $n$, the word size $w$, the number of available circuit inputs and outputs $I$ and the number of available gates $G$. The complexity of a query on a data structure is measured in time, defined as the number of steps executed on the RAM, and delay, defined as the number of steps when the RAM is waiting for an output of a circuit. The reason for devising two complexity measures is that when performing multiple queries, the processor time is sequential and cannot be pipelined, whereas the execution of the circuit can potentially be pipelined, not necessarily resulting in the delay adding up.

\medskip \noindent
Note that in general, it is not possible to use multiple circuits and assume that each can use all $G$ gates and $I$ I/O nodes. However, if all circuits use at most polynomial number of gates in their number of input/output nodes, the speedup caused by the circuits is at most polynomial in the number of their I/O nodes. Shrinking the circuits by a constant factor (that is, taking circuits for values of $I$ and $G$ smaller by a constant factor) will, therefore, incur at most constant slowdown. In this paper, whenever we use multiple circuits in an algorithm, they all use a polynomial number of gates and we do not further discuss the necessity to share resources between circuits.

We assume that we can copy $\mathcal{O}(I)$ bits in memory in time $\mathcal{O}(1)$. We call this the \textsc{Copy} instruction. For the case when $G$ is polynomial in $I$, this follows from the following theorem (however, in practice it would not make sense to use FPGA for copying data).

\begin{theorem} \label{th:copy}
	In the PCRAM model, if $G \in \mathcal{O}(\text{poly}(I))$, a contiguous block of $\mathcal{O}(I)$ bits can be copied in time $\mathcal{O}(1)$ without use of the \textsc{Copy} instruction.
\end{theorem}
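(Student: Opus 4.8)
The plan is to observe that a copy is nothing but the identity map $\{0,1\}^k\to\{0,1\}^k$, which has an entirely trivial circuit, and then to call it a constant number of times.

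First I would build the circuit $C_{\mathrm{id}}$ with input nodes $x_1,\dots,x_k$ and output nodes $y_1,\dots,y_k$, where every $y_i$ is an $\mathrm{id}$-gate whose single incoming edge comes from $x_i$. Then $\text{size}(C_{\mathrm{id}})=k$ and $\text{del}(C_{\mathrm{id}})=1$, and $C_{\mathrm{id}}$ is synchronous. I would take $k$ to be the largest multiple of $w$ with $2k\le I$; the factor $2$ matters because a single run of $C_{\mathrm{id}}$ occupies $k$ input pins \emph{and} $k$ output pins, so it must use at most $I$ I/O nodes altogether, and making $k$ a multiple of $w$ keeps its input and output word-aligned. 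Producing $C_{\mathrm{id}}$ in the circuit-generating phase takes $O(k)=\text{poly}(I)$ time, so it is a legal first-phase output, and the hypothesis $G\in\mathcal{O}(\text{poly}(I))$ is what lets us appeal to the resource-sharing remark preceding the theorem: adding $C_{\mathrm{id}}$ to whatever circuits an algorithm already uses costs at most a constant factor.

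The copy routine then works as follows. Let $B$ be the block of $m=\mathcal{O}(I)$ consecutive bits to be moved and $T$ the target; we may assume both begin at word boundaries and that $B$ spans a whole number of words, since an arbitrary block differs from such a one by at most one partial word at each end together with a fixed sub-word misalignment between $B$ and $T$, all of which can be absorbed into a constant number of word-RAM shift and mask operations (or into $\mathcal{O}(1)$ fixed-shift variants of $C_{\mathrm{id}}$ that realise the misalignment by rewiring, still at depth $1$). If $I<4w$, then $B$ occupies only $\mathcal{O}(1)$ words and ordinary word-RAM instructions copy it in $\mathcal{O}(1)$ steps. Otherwise $k>I/2-w\ge I/4$, so $B$ splits into $\lceil m/k\rceil=\mathcal{O}(1)$ contiguous chunks of at most $k$ bits, each a whole number of words; processing them in increasing or decreasing address order (as in an in-place memory move, so that a possible overlap of $B$ and $T$ is respected), I would for each chunk $j$ execute $\textsc{RunCircuit}(C_{\mathrm{id}},s_j,t_j)$ and wait the single step $\text{del}(C_{\mathrm{id}})=1$ before issuing the next call — which is safe because $C_{\mathrm{id}}$ reads its input at the instant the instruction is issued and writes its output exactly one step later, so the whole process behaves like a sequential memory move. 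The routine thus runs in $2\cdot\mathcal{O}(1)=\mathcal{O}(1)$ time-steps and never uses the \textsc{Copy} instruction.

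I do not expect a genuine obstacle here; the only points that need care are the pin accounting (inputs together with outputs must fit into the $I$ physical pins, which forces $k\le I/2$ and is why the small-$I$ regime is peeled off), the word-alignment requirement on the addresses passed to \textsc{RunCircuit}, and the possibility that $B$ and $T$ overlap — all routine. Note also that pipelining is not needed for this argument; alternatively one could fire the $\mathcal{O}(1)$ \textsc{RunCircuit} calls back to back and collect all outputs at the end, still in $\mathcal{O}(1)$ time.
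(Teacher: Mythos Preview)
Your approach is the same idea as the paper's---realise the copy by an identity circuit---but there is a genuine gap. You build a \emph{single} circuit $C_{\mathrm{id}}$ that copies exactly $k$ bits, with $k\approx I/2$, and then split the block $B$ into $\lceil m/k\rceil=\mathcal{O}(1)$ chunks ``of at most $k$ bits''. But a call $\textsc{RunCircuit}(C_{\mathrm{id}},s_j,t_j)$ always writes exactly $k$ bits at $t_j$, so on any chunk shorter than $k$ you overwrite memory past the end of the target. Your boundary discussion only trims a partial word at each end; it does not address a residual chunk of, say, $k/2$ bits, which is still $\Theta(I/w)$ words and hence not recoverable by $\mathcal{O}(1)$ word-RAM moves. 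When $m\ge k$ you could let the last window be $[m-k,m)$ and overlap the previous one, but when $m<k$ there is no window of size $k$ contained in $[0,m)$ at all, so the single-size circuit cannot be used without clobbering adjacent data.

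The paper closes precisely this hole by carrying a whole family of identity circuits $CP_1, CP_2, CP_4,\dots, CP_K$ with $K\approx I/8$: for a block of any length $m=\mathcal{O}(I)$ one picks the largest power of two $s\le m$ (or $s=K$ if $m>K$) and covers $[0,m)$ by $\mathcal{O}(1)$ possibly overlapping intervals of length exactly $s$, each of which now fits inside the block. The total gate and I/O count summed over all sizes is still $\mathcal{O}(K)$, so the resource-sharing argument you invoked is unaffected. Add this logarithmic family of circuits and your argument goes through; in fact your treatment of word alignment and of the overlap between $B$ and $T$ is more careful than the paper's.
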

\begin{proof}
	Let $CP_k$ be a copy circuit of size $k$ that is formed by $k$ identity gates. $CP_k$ copies a contiguous block of $k$ bits in time $\mathcal{O}(1)$ and it uses $2k$ I/O pins. 
	Next, let $K$ denote the greatest power of two such that $8K \le I$. We put $\mathcal{C} = \{ CP_1, CP_2, CP_4, \dots, CP_K\}$. In total the circuits in $\mathcal{C}$ use $2K - 1 < G$ gates and $4K - 2 < I$ I/O pins.
	
	For every contiguous block of $\mathcal{O}(I)$ bits there exists $s \in \{ 2^0,2^1,2^2,\dots, K\}$ such that the block can be covered by $\mathcal{O}(1)$ contiguous intervals of size $s$, not necessarily pairwise disjoint. The block can then be copied by $\mathcal{O}(1)$ calls of $\textsc{RunCircuit}(CP_s)$ instruction in time $\mathcal{O}(1)$.

At most one half of available gates and I/O gates are used for the \textsc{Copy} circuit. We set $G' = \frac{1}{2}G$ and $I' = \frac{1}{2}I$ as the parameters of the circuit module for the other circuits. Since $G \in \mathcal{O}(\text{poly}(I))$, we incur at most constant slowdown by a constant decrease in the model's parameters.
\qed
\end{proof}
 
The running time of the algorithms will often depend on ``number of elements that can be processed on the circuit at once''. We usually call this number $k$ and define it when needed.

\section{Algorithmic Results}
In this section, we show several algorithms that are asymptotically faster than the best known algorithms in the word-RAM.

It would also be possible to use in PCRAM a circuit to speed up data structures. However, since efficient data structures often depend on doing random access in memory, it will be difficult to improve the running time of the data structure by more than a factor of $\Theta(\log(I))$, as speedups gained from circuits depend on sequential access to data. It is likely that this improvement will not be enough to justify practical use of an FPGA. In Section~\ref{section:lower_bounds}, we give lower bounds in the PCRAM model, including lower bound on search trees.

\subsection{Aggregation}\label{section:aggregation}
We show an efficient algorithm performing aggregation operation on an array. While this operation cannot be done in constant time on word-RAM, pipelining can be used to get constant time per operation when amortized over great enough number of instances in the PCRAM model. In Section~\ref{section:sorting}, we use this operation for sorting numbers.

\begin{problem2}[Aggregation]\label{problem:aggregation}
\strut \\
\emph{Input:} array $A$ of $n$ numbers, $n$-bit mask $M$ \\
\emph{Output:} $t$ -- the number of ones in $M$ and array $B$ -- a permutation of $A$ such that if $M[i] = 1$ and $M[j] = 0$, then $A[i]$ precedes $A[j]$ in array $B$ for all $i,j$.
\end{problem2}

In this section, we use ${k \in \Theta\!\left(\min\left(\frac{I}{w},\frac{G}{w} / \log^2\frac{G}{w}\right)\right)}$.

\begin{theorem}\label{th:aggregation}
	Aggregation of $n$ numbers can be computed on PCRAM in time $\mathcal{O}\!\left(\frac{n}{k} + \log^2 k\right)$.
\end{theorem}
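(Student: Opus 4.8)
The plan is to build a single aggregation circuit $C$ that handles one block of $k$ consecutive numbers, run it on all $n/k$ blocks in a pipelined fashion, and then stitch the per-block results together on the word-RAM using prefix sums and the \textsc{Copy} instruction. The input of $C$ is a block of $k$ numbers ($kw$ bits) together with the $k$ corresponding mask bits, which fits the I/O budget since $k = \mathcal{O}(I/w)$. Inside $C$, a parallel-prefix computation on the mask bits produces, for every position $i$, the number $p_i$ of ones strictly before $i$, and the block total $t_b = p_k$; this has depth $\mathcal{O}(\log k)$. From $p_i$ I assign each element a destination rank inside the block: a $1$-element at position $i$ goes to rank $p_i$, and a $0$-element goes to rank $t_b + (i - p_i)$, i.e. its count of preceding zeros shifted past all the ones. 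The circuit then permutes the $k$ payloads to these ranks with a bitonic sorting network keyed on the $\lceil\log k\rceil$-bit ranks, of depth $\mathcal{O}(\log^2 k)$. Thus $C$ outputs the locally aggregated block (ones first, then zeros) together with the count $t_b$, and $\mathrm{del}(C) = \mathcal{O}(\log^2 k)$.

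In the first phase I issue $\textsc{RunCircuit}$ on each of the $n/k$ blocks of $A$, one per time-step, writing the aggregated output of block $b$ into a scratch slot $A'[bk..(b+1)k)$ and its count $t_b$ into a separate array. Because every invocation uses the same circuit and hence the same delay, the outputs land in disjoint, non-overlapping slots one step apart, so no concurrent writes arise; by the pipelining property the last output arrives after $n/k + \mathrm{del}(C) = \mathcal{O}(n/k + \log^2 k)$ steps.

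The second phase concatenates the blocks on the word-RAM. Each aggregated block is a ones part of length $t_b$ followed by a zeros part of length $k - t_b$. Computing the prefix sums $T_b = \sum_{b' < b} t_{b'}$ and the global total $t = T_{n/k}$ over the $n/k$ counts costs $\mathcal{O}(n/k)$ RAM time, and this also yields the required output $t$. I then use \textsc{Copy} to move the ones part of each block (a contiguous run of $t_b \le k$ words, hence $\mathcal{O}(I)$ bits, copyable in $\mathcal{O}(1)$) to output position $T_b$, and its zeros part to position $t + bk - T_b$. Across all blocks the ones regions tile $[0,t)$ and the zeros regions tile $[t,n)$, so all targets are pairwise disjoint and the result is a valid aggregation; since every offset is an element (hence word) boundary, the copies are well aligned. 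This phase costs $\mathcal{O}(n/k)$, giving the claimed $\mathcal{O}(n/k + \log^2 k)$ overall.

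The main obstacle is making the resource bounds of $C$ consistent with the definition of $k$. The sorting network uses $\mathcal{O}(k\log^2 k)$ comparators, each acting on a $w$-bit payload and a $\lceil\log k\rceil$-bit key, for $\mathcal{O}(k w \log^2 k)$ gates (using $\log k = \mathcal{O}(w)$). Requiring this to be at most $G$ is exactly the constraint $k\log^2 k = \mathcal{O}(G/w)$, which the choice $k = \Theta\!\big((G/w)/\log^2(G/w)\big)$ satisfies, while the I/O bound forces the other term $k = \Theta(I/w)$; I would verify these two estimates carefully, together with the claim that a variable-length \textsc{Copy} of up to $\mathcal{O}(I)$ bits runs in $\mathcal{O}(1)$.
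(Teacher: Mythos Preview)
Your two-phase scheme (pipeline a per-block aggregator, then concatenate the ones/zeros parts with prefix sums and \textsc{Copy}) is exactly the paper's approach, and the second phase is handled the same way.

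The one real slip is inside the circuit. You compute $\lceil\log k\rceil$-bit destination ranks and then run a bitonic sorter keyed on those ranks. With fan-in-two gates, comparing two $\lceil\log k\rceil$-bit keys needs depth $\Theta(\log\log k)$, so your sorter has depth $\Theta(\log^2 k\cdot\log\log k)$, not $\mathcal{O}(\log^2 k)$; likewise your prefix-sum stage is $\Theta(\log k\cdot\log\log k)$, not $\Theta(\log k)$. That extra $\log\log k$ would leak into the final bound. The paper sidesteps this entirely: since stability within a block is irrelevant for the stated aggregation problem, it simply runs the bitonic sorter on the \emph{one-bit} mask values themselves (in descending order), with comparators that swap the attached $w$-bit payloads. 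One-bit comparison is constant depth, so each comparator is $\mathcal{O}(w)$ gates and depth $\mathcal{O}(1)$, giving the clean $\mathcal{O}(\log^2 k)$ circuit depth. Your rank computation is unnecessary; drop it and sort directly on the mask bit, and your argument matches the paper.
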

\begin{proof}
	Without loss of generality, we suppose that $n$ is divisible by $k$. Otherwise, we use a circuit to extend the input to the smallest greater size divisible by $k$.

	We construct a circuit that can aggregate $k$ numbers using a sorting network of order $k$ \cite{cit:bitonic}. We call this the aggregator circuit. We use a sorting network to sort $M$ in descending order. The sorting network uses modified comparators that do not only compare and potentially swap values in $M$, but also swap the corresponding items in array $A$ when swapping values in $M$.
	We use a synchronous sorting network called the bitonic sorter \cite{cit:bitonic}. Bitonic sorter of order $k$ uses $\mathcal{O}(k \log^2 k)$ comparators and has depth $\mathcal{O}(\log^2 k)$.
	The modified comparator that also moves items in array $A$ can be implemented using $\mathcal{O}(w)$ gates in depth $\mathcal{O}(1)$, since we are only comparing one-bit values. Therefore, our modification of the sorting network uses $\mathcal{O}(kw \log^2 k)$ gates, its depth is $\mathcal{O}(\log^2 k)$ and uses $\mathcal{O}(kw)$ I/O nodes.

	Number $t$ has $\ceil{\log k}$ bits and each of them can be derived independently on others from the position of the last occurrence of bit $1$ in sorted array $M$. 
	First, the circuit determines in parallel for every position of sorted $M$ whether it is the last occurrence of $1$. This can be done in constant depth with $\mathcal{O}(k)$ gates. 
	Then, for every bit of $t$, the circuit checks if the last $1$ is at one of the positions that imply that the bit of $t$ is equal to $1$. The value of one bit of $t$ can be computed in depth $\mathcal{O}(\log k)$ with $\mathcal{O}(k)$ gates using a binary tree of OR gates, therefore we need $\mathcal{O}(k \log k)$ gates to compute the value of $t$ from sorted $M$ in depth $\mathcal{O}(\log k)$.
	
	In total the circuit uses $\mathcal{O}(kw \log^2 k)$ gates, it takes ${k(w+1)}$ input bits, produces ${kw + \ceil{\log k}}$ output bits and its depth is $\mathcal{O}(\log^2 k)$.

\medskip\noindent
	The algorithm runs in two phases:

	In the \emph{first phase}, split the input into chunks $(A_1, M_1), (A_2, M_2), \dots (A_{n/k},\allowbreak M_{n/k})$ of size $k$ and use the aggregator circuit to separately aggregate elements in each block. The circuit produces outputs $t_1, t_2, \dots, t_{n/k}$ and $B_1, B_2, \dots, B_{n/k}$. It may be necessary to wait for the delay after this phase.
	
	In the \emph{second phase}, for every $i \in {1, 2, \dots, n/k}$ split array $B_i$ into two arrays $P_i$ and $S_i$, where $P_i$ is the prefix of $B_i$ of length $t_i$ and $S_i$ is the rest of $B_i$. Concatenate the arrays $P_1, P_2, \dots, P_{n/k}, S_1, S_2, \dots, S_{n/k}$ using the \textsc{Copy} instruction.
	
\smallskip\noindent
	In total, the algorithm runs in time $\mathcal{O}\!\left(\frac{n}{k} + \log^2 k\right)$. \qed
\end{proof}

The construction can likely be asymptotically improved by using a sorting network with $\mathcal{O}(n \log n)$ comparators and depth $\mathcal{O}(\log n)$ \cite{cit:opt_sort_netw}. However, such sorting networks are complicated and impractical due to large multiplicative constants.

\begin{theorem}\label{th:bulk:aggregation}
	The aggregation operation on $m$ arrays with a total length $n$ can be computed in time $\mathcal{O}\big(\frac{n}{k} + m + \log^2 k\big)$.
\end{theorem}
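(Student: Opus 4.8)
The plan is to reduce to the single-array construction of Theorem~\ref{th:aggregation}, run on each of the $m$ arrays, but with the circuit executions \emph{pipelined across all arrays}, so that the $\mathcal{O}(\log^2 k)$ delay of the aggregator circuit is paid only once in total rather than once per array. Write $n_i$ for the length of the $i$-th array, so $\sum_{i=1}^{m} n_i = n$.

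First I would preprocess each array, padding $A_i$ and its mask $M_i$ (padding the mask with zeros, so the new entries land in the suffix and do not affect $t_i$) up to the next multiple of $k$ with a fixed padding circuit, exactly as in the proof of Theorem~\ref{th:aggregation}; this costs $\mathcal{O}(1)$ per array, hence $\mathcal{O}(m)$ overall, and afterwards array $i$ splits into exactly $\lceil n_i/k\rceil$ chunks of size $k$, with a total chunk count
\[
  C \;=\; \sum_{i=1}^{m} \left\lceil \frac{n_i}{k} \right\rceil \;\le\; \frac{n}{k} + m .
\]
In the first phase I would issue $\textsc{RunCircuit}$ on the aggregator circuit once for each of the $C$ chunks, one instruction per time-step and without waiting, writing the outputs to $C$ pairwise disjoint blocks; this takes $C$ steps to issue all the calls plus one final wait of $\text{del} = \mathcal{O}(\log^2 k)$ steps for the last output, i.e.\ $\mathcal{O}(n/k + m + \log^2 k)$ in total. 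In the second phase I would, for each array $i$, accumulate the per-chunk counts into $t_i$ and concatenate the prefixes (of length $t$) followed by the suffixes of its chunk outputs, as in Theorem~\ref{th:aggregation}; since $k \in \Theta(\min(I/w,\dots))$ gives $kw = \mathcal{O}(I)$, each chunk output occupies $\mathcal{O}(I)$ bits and is moved by a single $\textsc{Copy}$, so this phase costs $\mathcal{O}(C) = \mathcal{O}(n/k + m)$. Adding the two phases yields the claimed bound.

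The step I would be most careful about is that the pipelining genuinely amortizes the circuit delay: the $m$ arrays must be fed to one uninterrupted stream of $\textsc{RunCircuit}$ calls rather than handled array-by-array (which would give only the weaker $\mathcal{O}(n/k + m\log^2 k)$), and the per-chunk bookkeeping — allocating the disjoint input/output blocks, laying out each chunk's $k(w+1)$ input bits in the format the aggregator expects, splitting $B$ at position $t$, summing the $t_j$'s — must stay $\mathcal{O}(1)$ per chunk so as to be absorbed into the $\mathcal{O}(C)$ term. All of this follows from the $\textsc{Copy}$ instruction together with $kw = \mathcal{O}(I)$; once that is in hand, the theorem is immediate.
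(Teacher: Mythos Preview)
Your proposal is correct and follows essentially the same approach as the paper: run the first (circuit) phase of the single-array algorithm across all $m$ arrays in one pipelined stream, then run all the second (concatenation) phases, so the $\mathcal{O}(\log^2 k)$ delay is incurred only once. The paper's proof is a two-sentence sketch of exactly this interleaving; your version simply spells out the chunk count $C \le n/k + m$ and the per-chunk bookkeeping more carefully.
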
   
\begin{proof}
	We use the algorithm from the proof of Theorem~\ref{th:aggregation} and interleave its executions to decrease delay.

	First, we run the first phase of the algorithm from proof of Theorem~\ref{th:aggregation} for all the arrays and then run the second phases. This way, we only wait for the delay before the second phase at most once. \qed
\end{proof}

\subsection{Sorting} \label{section:sorting}
We show an asymptotically faster modification of randomized Quicksort with expected running time $\mathcal{O}\left(\frac{n}{k}\log n + \polylog(n,k,w)\right)$. We achieve this by improving the time complexity of the partition subroutine. Throughout this section, we use ${k \in \Theta\big(\min\big(\frac{I}{w},\frac{G}{w \log w} / \log^2 \frac{G}{w \log w}\big)\big)}$.

\begin{problem2}[Pivot Partition]\label{problem:pivot}
\strut \\
\emph{Input:} array $A$ of $n$ numbers, pivot $p$ \\
\emph{Output:} arrays $A_1$, $A_2$ and the integer $|A_1|$, such that $A_1 \cupdot A_2 = A$ and $A_1$ consists exactly of elements $a \in A,\text{ s.t. }a \leq p$, where $\cupdot$ is the disjoint union.
\end{problem2}

\begin{theorem}\label{th:pivot}
	Pivot partition of $n$ numbers can be computed in the PCRAM model in time $\mathcal{O}(n/k + \log^2 k + \log w)$.
\end{theorem}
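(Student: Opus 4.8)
The plan is to reduce Pivot Partition to the Aggregation problem of Section~\ref{section:aggregation}. Observe that the required output is exactly the aggregation of $A$ with respect to the mask $M$ defined by $M[i]=1$ if and only if $A[i]\le p$: the array $B$ produced by aggregation lists all elements $\le p$ followed by all elements $> p$, and the returned count $t$ equals $|A_1|$, so $A_1 = B[1..t]$ and $A_2 = B[t+1..n]$ (which is precisely the promised disjoint-union decomposition). Hence it suffices to compute $M$ and then run the aggregation machinery, making sure the extra comparison with $p$ costs only $\mathcal{O}(\log w)$ additional delay, incurred once.

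Concretely, I would build a circuit that, given a chunk of $k$ numbers together with the pivot $p$, first computes the $k$-bit mask of that chunk and then feeds the chunk and this mask into the aggregator circuit from the proof of Theorem~\ref{th:aggregation}. A comparison of two $w$-bit numbers can be done with $\mathcal{O}(w)$ gates in depth $\mathcal{O}(\log w)$, and $p$ can be fanned out to the $k$ comparators with $\mathcal{O}(kw)$ identity gates, so this front layer adds only $\mathcal{O}(kw)$ gates and $\mathcal{O}(\log w)$ depth. The combined circuit therefore still has $\mathcal{O}(kw\log^2 k)$ gates, $\mathcal{O}(kw)$ I/O nodes (the pivot adds only $w$ input bits), and depth $\mathcal{O}(\log^2 k + \log w)$; by the choice of $k$ in this section this fits within the $G$ gates and $I$ I/O pins of the model, shrinking the parameters by a constant factor if necessary as discussed after Theorem~\ref{th:copy}. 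Running this circuit once per chunk and then concatenating the per-chunk prefixes and suffixes with the \textsc{Copy} instruction exactly as in the second phase of the proof of Theorem~\ref{th:aggregation} yields the algorithm: $\mathcal{O}(n/k)$ \textsc{RunCircuit} and \textsc{Copy} operations, plus a single wait of $\mathcal{O}(\log^2 k + \log w)$ steps for the delay between the two phases, for a total of $\mathcal{O}(n/k + \log^2 k + \log w)$. As in Theorem~\ref{th:aggregation}, if $n$ is not divisible by $k$ we first pad the input using a circuit, which is harmless.

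I expect the only real subtlety to be the resource bookkeeping rather than any new idea: one has to verify that prepending the mask-computing layer does not push the gate count past $G$ — this is exactly why the present section works with a slightly smaller $k$ than Section~\ref{section:aggregation} — and one must arrange the broadcast of $p$ and the pipelining of the two circuit layers so that the $\mathcal{O}(\log w)$ comparison depth and the $\mathcal{O}(\log^2 k)$ sorting-network depth are each paid at most once overall, not once per chunk. Everything else is a direct reuse of the two-phase construction of Theorem~\ref{th:aggregation}.
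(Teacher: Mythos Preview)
Your proposal is correct and follows essentially the same reduction as the paper: compute the mask $M[i]=[A[i]\le p]$ with a depth-$\mathcal{O}(\log w)$ comparison circuit and then invoke the aggregation machinery of Theorem~\ref{th:aggregation}. The only cosmetic difference is that you fuse the comparison layer directly onto the aggregator circuit while the paper computes $M$ by a separate circuit pass and then calls aggregation; also note that the slightly smaller $k$ in this section is actually dictated by the $w$-bit bitonic sorter used later for sorting (whose synchronous comparators cost $\mathcal{O}(w\log w)$ gates each), not by your $\mathcal{O}(kw)$-gate comparison front layer, which is dominated by the aggregator anyway.
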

\begin{proof}
	We reduce the pivot partition problem to aggregation. For each $0 \le i < n$ we set $M[i] = [A[i] \le p]$ and perform aggregation with input $(A, M)$, getting output $(t, B)$. $A_1$ consists of the first $t$ elements of $B$ and $A_2$ of the rest of $B$. The array length $|A_1|$ is then equal to $t$.
	
	The reduction can be done on PCRAM in time $\mathcal{O}(n/k + \log w)$ by using a circuit to compute $k$ values of array $M$ at once in depth $\mathcal{O}(\log w)$ with $\mathcal{O}(kw)$ gates. Together with aggregation, this gives the desired running time. \qed
\end{proof}

Similarly to aggregation, pivot partition problems can be solved in bulk. The next theorem follows from Theorem~\ref{th:bulk:aggregation} by the same reduction as in Theorem~\ref{th:pivot}.

\begin{theorem}\label{th:pivot:bulk}
	We can solve $m$ independent pivot partition problems for arrays of total length $n$ in the PCRAM model in time $\mathcal{O}(n/k + m + \log^2 k + \log w)$.
\end{theorem}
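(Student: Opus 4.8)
The plan is to reuse the reduction from pivot partition to aggregation established in the proof of Theorem~\ref{th:pivot}, but to apply it simultaneously to all $m$ instances and then feed the resulting aggregation instances into the bulk aggregation routine of Theorem~\ref{th:bulk:aggregation}. Concretely, write the inputs as $(A^{(1)},p^{(1)}),\dots,(A^{(m)},p^{(m)})$ with $\sum_j |A^{(j)}| = n$.

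First, for each $j$ I would build the mask $M^{(j)}$ with $M^{(j)}[i] = [A^{(j)}[i] \le p^{(j)}]$ using exactly the comparison circuit from the proof of Theorem~\ref{th:pivot}, which processes $k$ array entries at a time in depth $\mathcal{O}(\log w)$ with $\mathcal{O}(kw)$ gates. Padding each array up to the next multiple of $k$ (with a circuit, as in the proof of Theorem~\ref{th:aggregation}) adds fewer than $k$ dummy entries per array, so the whole first phase splits into fewer than $n/k + m$ chunks; with pipelining all masks are produced in time $\mathcal{O}(n/k + m + \log w)$, the $\log w$ being the single wait for the circuit delay at the end of this phase. Next I would run the bulk aggregation algorithm of Theorem~\ref{th:bulk:aggregation} on the $m$ pairs $(A^{(j)}, M^{(j)})$, whose total length is below $n + mk$; since $(n+mk)/k = n/k + m$, Theorem~\ref{th:bulk:aggregation} gives time $\mathcal{O}(n/k + m + \log^2 k)$ and outputs $(t_j, B^{(j)})$ for every $j$ (dummy entries carry mask bit $0$, hence land at the tail of $B^{(j)}$ and do not affect $t_j$; truncate $B^{(j)}$ back to $|A^{(j)}|$). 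Finally, for each $j$ set $A_1^{(j)}$ to the first $t_j$ entries of $B^{(j)}$, $A_2^{(j)}$ to the remainder, and return $|A_1^{(j)}| = t_j$; this is pure offset bookkeeping. Summing the three phases gives $\mathcal{O}(n/k + m + \log^2 k + \log w)$.

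The only points that need care — and the part I would expect to be mildly delicate rather than hard — are (i) arranging the \textsc{RunCircuit} calls so that the circuit delays do not accumulate, i.e.\ the RAM waits for a delay only a constant number of times overall: this is the same interleaving trick used in the proof of Theorem~\ref{th:bulk:aggregation} (run all first-phase calls, then all second-phase calls), now applied across the mask-building phase and the two internal phases of bulk aggregation; and (ii) checking that the comparison circuit ($\mathcal{O}(kw)$ gates, $\mathcal{O}(kw)$ I/O) and the aggregator circuit ($\mathcal{O}(kw\log^2 k)$ gates) coexist on the FPGA for this section's choice of $k$ — both are polynomial in their number of I/O nodes, so by the resource-sharing remark following Theorem~\ref{th:copy} a constant shrink of $G$ and $I$ suffices and costs only a constant factor.
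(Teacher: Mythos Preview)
Your proposal is correct and follows exactly the approach the paper takes: the paper's own proof is the one-line remark that the theorem ``follows from Theorem~\ref{th:bulk:aggregation} by the same reduction as in Theorem~\ref{th:pivot},'' and you have simply unpacked that sentence, correctly handling the padding, the single wait for the $\mathcal{O}(\log w)$ comparison-circuit delay, and the resource-sharing check. Nothing in your write-up diverges from the intended argument.
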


\begin{theorem}
	There is a randomized algorithm in the PCRAM model which sorts $n$ numbers in expected time $\mathcal{O}\left(\frac{n}{k}\log n + \polylog(n,k,w)\right)$.
\end{theorem}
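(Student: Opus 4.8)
The plan is to run randomized Quicksort but to process its recursion tree level by level, so that all subarrays living at a given recursion depth can be partitioned by a single invocation of Theorem~\ref{th:pivot:bulk}. Keep a collection of active subarrays, each a contiguous range of $A$, with the ranges laid out in the correct sorted order; initially the only active subarray is $A$. In one round, draw a uniformly random pivot inside each active subarray (using the random-cell operation, $\mathcal{O}(1)$ expected work apiece, breaking ties among equal keys by memory position so the process behaves as if all keys were distinct), call the bulk pivot partition on all of them simultaneously, and then for each subarray delete the pivot, split into its two parts, and --- crucially --- stop recursing on (i.e.\ retire to a list $L$) any resulting part of size at most $k$. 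After an expected $\mathcal{O}(\log n)$ rounds everything has been retired and $A$ has been permuted into consecutive blocks, each of length $\le k$ and sitting in its correct sorted zone.

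It then remains to sort each block of $L$ internally. Build once a bitonic sorter of order $k$ on $w$-bit keys: it has $\mathcal{O}(\log^2 k)$ layers of $\mathcal{O}(k)$ comparators, each comparator being an $\mathcal{O}(w)$-gate, $\mathcal{O}(\log w)$-depth circuit on two $w$-bit numbers, hence $\mathcal{O}(kw\log^2 k)$ gates, $\mathcal{O}(kw)$ I/O nodes, and depth $\mathcal{O}(\log^2 k\log w)$, which by the choice of $k$ fits within $G$ gates and $I$ I/O nodes up to constant factors. Process the blocks one per time-step: copy a block to scratch, pad it to length $k$ with the maximum key, run the sorter, and copy the first (block length) outputs back; pipelining the blocks through the circuit, this costs $\mathcal{O}(|L|)$ \textsc{RunCircuit}/\textsc{Copy} instructions plus a single $\mathcal{O}(\log^2 k\log w)$ delay.

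For the running time, write $n_j,m_j$ for the total length and number of active subarrays in round $j$ and $D$ for the number of rounds; round $j$ costs $\mathcal{O}(n_j/k+m_j+\log^2 k+\log w)$ by Theorem~\ref{th:pivot:bulk} (the pivot draws add only $\mathcal{O}(m_j)$), so the recursion costs $\mathcal{O}\big(\tfrac1k\sum_j n_j+\sum_j m_j+D(\log^2 k+\log w)\big)$. From the standard analysis of the random binary search tree, $\mathbb{E}[D]=\mathcal{O}(\log n)$ and each element is active for an expected $\mathcal{O}(\log n)$ rounds, so $\mathbb{E}[\sum_j n_j]=\mathcal{O}(n\log n)$. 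The quantity $\sum_j m_j$ equals the number of nodes of the recursion tree truncated to make size-$\le k$ subarrays into leaves; letting $f(n)$ be the expected number of its internal nodes, $f(n)=0$ for $n\le k$ and $f(n)=1+\tfrac2n\sum_{i=0}^{n-1}f(i)$ for $n>k$, whose solution $f(n)=\tfrac{2(n+1)}{k+2}-1$ is $\mathcal{O}(n/k)$; since there are at most twice as many leaves, $\mathbb{E}[\sum_j m_j]=\mathcal{O}(n/k)$ and $\mathbb{E}[|L|]=\mathcal{O}(n/k)$. Hence the recursion runs in expected time $\mathcal{O}\big(\tfrac nk\log n+\log n(\log^2 k+\log w)\big)$, the base-case phase in expected time $\mathcal{O}(n/k+\log^2 k\log w)$, and the total is $\mathcal{O}\big(\tfrac nk\log n+\polylog(n,k,w)\big)$.

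The crux is the additive $+m$ in Theorem~\ref{th:pivot:bulk}: a full Quicksort recursion tree has $\Theta(n)$ nodes, so $\sum_j m_j$ would be $\Theta(n)$ and swamp the target bound --- truncating at size $k$, together with the $\mathcal{O}(n/k)$ bound on the number of leaf subarrays, is exactly what avoids this. The remaining work is bookkeeping: keeping the active subarrays and the retired blocks as in-place, contiguous, correctly ordered memory ranges throughout, so that both the bulk partition and the final sorting network can be applied with only $\mathcal{O}(1)$ auxiliary data movement per subarray, and dealing with duplicate keys and the placement of the pivot via the position-based tie-break.
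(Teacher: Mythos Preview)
Your proof is correct and follows the same approach as the paper: level-by-level randomized Quicksort with bulk pivot partition on each layer and a bitonic sorter for blocks of size at most $k$. Two minor remarks: the paper bounds $m_j\le n/k$ directly (active subarrays in a layer are disjoint and each has size $>k$), so your recurrence for $\sum_j m_j$, while correct and tighter, is unnecessary for the stated bound; and a \emph{synchronous} $w$-bit comparator needs $\mathcal{O}(w\log w)$ gates rather than $\mathcal{O}(w)$ (the data wires must be delayed through the $\mathcal{O}(\log w)$ comparison depth), which is exactly why the section's $k$ carries the extra $\log w$ factor.
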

\begin{proof}
We show a modification of randomized Quicksort. We use bulk pivot partitioning from Theorem~\ref{th:pivot:bulk} and a bitonic sorter \cite{cit:bitonic} of order $k$ sorting $w$-bit numbers. Each comparator can be synchronously implemented by $\mathcal{O}(w \log w)$ gates in depth $\mathcal{O}(\log w)$ by taking a standard comparator of depth $\mathcal{O}(\log w)$ and adding indentity gates to make it synchronous. Therefore, the bitonic sorter has $\mathcal{O}(kw \log^2{k}\log{w})$ gates and depth $\mathcal{O}(\log^2{k}\log{w})$. Note that we can also use the bitonic sorter of order $k$ to sort less than $k$ numbers if we fill the unused inputs by the maximum $w$-bit value and it is therefore enough to only have sorting circuit of one size.	
	The algorithm traverses the recursive tree of Quicksort algorithm in BFS manner. Whenever a subproblem has size at most $k$, it is sorted by the bitonic sorter.
	Each layer of the recursion tree corresponds to a sequence of blocks $B_1, \dots, B_m$ for some integer $m$ to be sorted satisfying that if $b_1 \in B_j, b_2 \in B_\ell$, such that $1 \le j < \ell \le m$, then $b_1 \leq b_2$.
	In each layer, blocks of length at most $k$ are sorted using the bitonic sorter and the remaining blocks are partitioned into two by the bulk pivot partition algorithm from Theorem~\ref{th:pivot:bulk} with pivots being chosen uniformly at random for each of the blocks.
	
	In each layer, there can be at most $n/k$ blocks with more than $k$ elements. Thus, from Theorem~\ref{th:pivot:bulk} follows that the partition requires $\mathcal{O}(n/k + \log^2 k + \log w)$ time. Furthermore, there can be at most $2n/k$ blocks of size at most $k$. We do not have to wait for the delay of the sorting circuit before processing of the next layer. Therefore, we spend $\mathcal{O}(n/k + \log^2 k + \log w)$ time per layer.
	
	The expected depth of the Quicksort recursion tree is $\mathcal{O}(\log n)$ \cite{cit:rand_tree_height} (note that Quicksort recursion depth is the same as the depth of a random binary search tree, which is proven to be $\mathcal{O}(\log n)$ in the cited paper). It follows that the total expected time complexity of our algorithm is $\mathcal{O}\big(\big(n/k + \log^2 k + \log w\big) \log n + \log^2 k \log w \big)$. \qed
\end{proof}

\subsection{Associative operation}\label{section:assoc_op}
Let $\otimes : \{0, 1\}^w \times \{0, 1\}^w \rightarrow \{0, 1\}^w$ be an associative operation (such as maximum or addition over a finite group). Let $C_1$ be a synchronous boolean circuit with $g'$ gates, $2w$ inputs and depth $d'$ which computes the function $\otimes$. We want to efficiently compute $\bigotimes_{i=1}^n a_i$ for input numbers $a_1, \dots, a_n \in \{0, 1\}^w$.

We can assume without loss of generality that there is a neutral element $e$ satisfying $(\forall{x \in \{0, 1\}^w})\;(x \otimes e = e \otimes x = x)$. If there is no such $e$, we extend $\otimes$ to have one. This only changes the size and depth of $C_1$ by a constant factor.

\begin{problem2}[$\bigotimes$-sum]\label{problem:sum}
\strut \\
\emph{Input:} array $A$ of $n$ numbers $a_1, \dots, a_n \in \{0, 1\}^w$ \\
\emph{Output:} value $\bigotimes_{i=0}^{n-1} A\left[i\right]$
\end{problem2}

The idea of the algorithm is to split the input into blocks that can be processed by the circuit simultaneously by pipelining the computation. In order to do this, we arrange copies of circuit $C_1$ into a full binary tree. For $j \ge 1$ we inductively define circuit $C_{j+1}$ that consists of two copies of $C_j$ whose outputs go into a copy of $C_1$.

In this section, we use $k \in \Theta\big(\min\big(\frac{I}{w}, \frac{G}{g'}\big)\big)$. We assume that $k$ is a power of two.

\begin{theorem}
$\bigotimes$-sum of $n$ $w$-bit numbers can be evaluated on PCRAM in time $\mathcal{O}(\frac{n}{k} + d'\log{(kd')})$.
\end{theorem}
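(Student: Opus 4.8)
\medskip\noindent\textbf{Proof plan.}
First I would build the balanced binary tree circuit $C_{\log k}$: it combines $k$ numbers into their $\bigotimes$-product, is synchronous, has depth $d'\log k$, uses $(k-1)g' = \mathcal{O}(G)$ gates and $(k+1)w = \mathcal{O}(I)$ I/O nodes, so it fits the model once the constant hidden in $k = \Theta(\min(I/w,G/g'))$ is chosen small enough. The algorithm evaluates $\bigotimes_{i=1}^n a_i$ by a $k$-ary reduction tree: level $1$ combines the $n$ inputs in consecutive $k$-blocks into $n_1 := \lceil n/k\rceil$ partial products, level $2$ combines those into $n_2 := \lceil n_1/k\rceil$, and so on for $R = \mathcal{O}(\log_k n)$ levels until $n_R = 1$, with incomplete blocks padded by the neutral element $e$. (If $n\le k$ a single evaluation suffices, so assume $n>k$.) Each level's outputs go into their own contiguous, word-aligned array, so every evaluation reads a contiguous $kw$-bit block; the at most $k$ padding copies of $e$ per level are produced in $\mathcal{O}(\log k)$ time by repeated doubling with \textsc{Copy}. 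Since $n_j \le n/k^j + \mathcal{O}(1)$, the total number of evaluations is $\sum_{j\ge1} n_j = \mathcal{O}(n/k)$, and the padding costs $\mathcal{O}(R\log k) = \mathcal{O}(\log n)$ overall.

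Next I would schedule all evaluations as a single pipelined stream — all of level $1$ back to back (there are no dependencies inside a level), then all of level $2$, and so on — so the only idle time is spent at level boundaries, waiting for the previous level's relevant outputs to leave the pipeline (and, since we issue at most one evaluation per step, no two circuit outputs ever collide). Writing $S_j$ for the step when level $j$ starts, and using that the $p$-th evaluation of level $j{+}1$ needs the $\min(pk,n_j)$-th output of level $j$, which is available $d'\log k$ steps after the $\min(pk,n_j)$-th evaluation of level $j$ is issued, a short computation shows it suffices to take $S_{j+1} = S_j + n_j + \max\bigl(0,\ d'\log k - \lfloor n_j/k\rfloor\bigr) + \mathcal{O}(1)$; in particular there is no stall within a level, and the idle time before level $j{+}1$ is $\mathcal{O}(1)$ unless $n_{j+1}<d'\log k$, and at most $d'\log k$ in any case. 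Because the $n_j$ shrink by a factor $k$ per level, the levels with $n_{j+1}<d'\log k$ are the last $\mathcal{O}\bigl(\min(R,\log_k(d'\log k))\bigr)$ of them, so the total running time is
\[
\mathcal{O}\!\left(\tfrac nk\right)\;+\;\mathcal{O}(\log n)\;+\;\mathcal{O}(d'\log k)\cdot\mathcal{O}\bigl(\min(R,\log_k(d'\log k))\bigr)\;+\;\mathcal{O}(d'\log k),
\]
the last term being the delay of the root evaluation.

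It remains to bound this by $\mathcal{O}(n/k + d'\log(kd'))$. For $n>k$ one always has $\log n = \mathcal{O}(n/k + \log k) = \mathcal{O}(n/k + d'\log(kd'))$ (split on whether $n\ge k^2$), so the $\mathcal{O}(\log n)$ term and the trailing $\mathcal{O}(d'\log k)$ term are fine. For the product term, split on whether $d'\log k\le n$: if so, then $\log_k(d'\log k)\le R$ and $(d'\log k)\log_k(d'\log k) = d'\log(d'\log k) = \mathcal{O}(d'\log(kd'))$ since $d'\log k\le kd'$; if not, then $n< d'\log k<(kd')^2$, so $R=\mathcal{O}(\log n)=\mathcal{O}(\log(kd'))$ and $(d'\log k)\cdot R=\mathcal{O}(d'\log(kd'))$. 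Combining the cases gives the claimed running time. I expect the pipeline idle-time bookkeeping of the second paragraph to be the main obstacle — in particular, arguing that merging all levels into one stream causes no stall within a level, and that the boundary waits, being confined to the $\mathcal{O}(\log_k(d'\log k))$ deepest levels, sum to $\mathcal{O}(d'\log(kd'))$ rather than the $\mathcal{O}(d'\log n)$ incurred by running the levels one at a time; fitting the circuit into $G,I$, the padding bookkeeping, and the final case analysis are routine.
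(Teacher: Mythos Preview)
Your proposal is correct and follows essentially the same approach as the paper: build the depth-$d'\log k$ binary-tree circuit $C_{\log k}$, reduce $n$ to $1$ by repeated $k$-ary \textsc{SumBlocks} passes, and observe that pipeline stalls occur only on the final $\mathcal{O}(\log_k(kd'))$ levels, giving total waiting time $\mathcal{O}(d'\log(kd'))$. Your scheduling recurrence for $S_{j+1}$ and the closing case analysis are in fact more carefully worked out than the paper's own (somewhat informal) argument, and your padding via \textsc{Copy} is a harmless variant of the paper's dedicated $e$-producing circuits.
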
 
\begin{proof}
The algorithm repeatedly calls subroutine \textsc{SumBlocks} which splits the input into blocks of size $k$ and computes the $\otimes$-sum of the block using the circuit $C_{\log k}$. The results from $C_{\log k}$ are collected in a buffer that is used as the input for the next call of the subroutine. Each call of \textsc{SumBlocks} reduces the input size $k$ times, therefore we have to call the subroutine $\left\lceil\log_k n\right\rceil$ times.

If the length of the input of subroutine \textsc{SumBlocks} is not divisible by $k$ we fill the last block to the full size with neutral element $e$. By an argument similar to that in Theorem~\ref{th:copy}, this can be done in constant time by having $\Theta(\log n)$ circuits, the $i$-th of which has size $w2^i$, has no inputs and produces $2^i$ copies of $e$.

For an input of length $n'$, \textsc{SumBlocks} invokes $\big\lceil\frac{n'}{k}\big\rceil$ calls of $C_{\log k}$ and runs in time $\Theta\big(\frac{n'}{k}\big)$ with delay $\Theta(d'\log k)$.

The size of the problem decreases exponentially, so the first call of \textsc{SumBlocks} dominates in terms of time complexity. The delay is the same for each call of \textsc{SumBlocks}. However, if the input has length of $\Omega(kd'\log k)$ words, the individual values are computed by the circuit before they are needed in the current call of \textsc{SumBlocks} and it is, therefore, not necessary to wait for the delay. We only wait for the delay the last $\mathcal{O}(\log_k (kd'\log k)) = \mathcal{O}(\log_k (kd'))$ executions of \textsc{SumBlocks}, resulting in total waiting time of $\mathcal{O}(d' \log k \log_k (kd')) = \mathcal{O}(d' \log (kd'))$.

\end{proof}

\subsection{Dynamic programs} \label{section:dynprog}

\paragraph{A dynamic program} is a recursive algorithm where the computed values are being memoized, meaning that the function is not computed if it has already been computed with the same parameters. Many dynamic programs used in real-world applications have the property that the subproblems form a multidimensional grid. We then call the individual subproblems cells. For simplicity, we will focus only on dynamic programs with a rectangular grid. However, it is possible to generalise the described technique to higher dimensions. 

Many dynamic programs such as those solving the longest common subsequence, shortest common supersequence and edit distance problems or the Needleman-Wunsch algorithm satisfy the following three properties:
\begin{enumerate}
\item Each cell depends only on a constant number of other cells.
\item Each cell $C$ depends only on cells which are in the upper-left quadrant with $C$ in the corner (as shown in figure~\ref{fig:culd_dyn}).
\item For any cell $C$, the cells that $C$ depends on have the same relative position with respect to $C$ (as shown in figure~\ref{fig:dyn_dependency}).
\end{enumerate}
We call such dynamic programs CULD (constant upper-left dependency). 

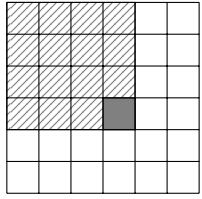
\begin{figure}[h]
\centering\begin{tikzpicture}[x=12pt,y=12pt]
{
\draw[pattern=north east lines, pattern color=gray] (0,6) rectangle (4,2);
\draw[fill=gray] (3,2) rectangle +(1,1);
\draw[step=1] (0,0) grid (6,6);
}
\end{tikzpicture}
\caption{The grey cell can only depend on the hatched cells in a CULD dynamic program}
\label{fig:culd_dyn}
\end{figure}

A CULD dynamic program has linear number of cells that are near the left or top edge of the grid and, therefore, should have a dependency on a cell that does not exist (because it would have at least one coordinate negative). These cells have to have their values computed in some other way and are called base cases. As it is usually trivial to compute the values of these cells, we will not deal with computing them in this paper.

Dynamic programs can be sped up on the PCRAM model using the circuit for (1) speeding up the computation of the individual values (it has to take more than constant time to compute each value for this to be possible) or (2) parallelizing the computation of the values (usually when the individual values can be computed in constant time). Computation of individual values of the function can be sped up in the special case of associative operation using the algorithm described in Section~\ref{section:assoc_op}. In this section, we show how to use the second approach to speed up the computation of the CULD dynamic programs.

\begin{theorem}
Let $P$ be a CULD dynamic program with dependency grid of size $M \times N$. Let $g'$ and $d'$ be the number of gates and delay, respectively, of a circuit which computes value of a cell from the values that the cell depends on. After all base cases of the dynamic program have been computed, the remaining values in the dependency grid can be computed in time $\mathcal{O}\big(\frac{MN}{k} + M + N + kd'^2\big)$ where $k \eqdef \min\big(\frac{I}{w}, \frac{G}{g'}\big)$.
\end{theorem}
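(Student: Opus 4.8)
The plan is to fill the grid one anti-diagonal at a time, using a single circuit that advances a window of $k$ cells forward by one anti-diagonal, and to pipeline the circuit calls so that the $\mathcal{O}(d')$ circuit delay is paid only $\mathcal{O}(kd')$ times in total rather than once per anti-diagonal. Throughout, let $\Delta$ be the fixed, position-independent set of dependency offsets guaranteed by the CULD properties and put $c=\max\{\delta_r+\delta_c:(\delta_r,\delta_c)\in\Delta\}$, a constant; then every cell on anti-diagonal $e=i+j$ depends only on cells lying on anti-diagonals $e-1,\dots,e-2c$ whose row differs from its own by at most $c$.

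First I would build the advancing circuit $B$. Wiring $k$ copies of the given cell-circuit in parallel, one per cell of a length-$k$ window of an anti-diagonal, already yields a synchronous circuit of depth $d'$: the copies are independent, so no depth-equalising identity gates are needed. Its inputs are the $\mathcal{O}(k)$ cell values from the previous $2c$ anti-diagonals on which the window depends, its outputs are the $k$ cells of the window, so it has $\mathcal{O}(kg')$ gates and $\mathcal{O}(kw)=\mathcal{O}(I)$ I/O bits, and shrinking $k$ by a constant makes everything fit the model, exactly as in the discussion preceding Theorem~\ref{th:copy}. To handle an anti-diagonal whose length is not a multiple of $k$, and the narrowing anti-diagonals near the grid corners, I would also keep a geometrically decreasing family $B,B_{1/2},B_{1/4},\dots$ of such circuits, as in the proof of Theorem~\ref{th:copy}, so that any anti-diagonal of length $\ell_e\le\min(M,N)$ can be swept by $\mathcal{O}(\lceil \ell_e/k\rceil)$ circuit calls.

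The algorithm then cuts the rows into stripes of $k$ consecutive rows and calls a \emph{batch} the at most $k$ cells of one anti-diagonal inside one stripe; a batch depends only on $\mathcal{O}(1)$ batches, those on the previous $2c$ anti-diagonals in the same stripe or the stripe below. It processes anti-diagonals in increasing order and, within one, its batches in increasing stripe order; for each batch it gathers the $\mathcal{O}(1)$ predecessor batches into a scratch buffer with $\mathcal{O}(1)$ \textsc{Copy} instructions, runs the appropriate circuit, and later uses \textsc{Copy} to move the $k$ results to the slot reserved for that batch (the base cases, available by hypothesis, are fed in the same way for the first $\mathcal{O}(c)$ anti-diagonals). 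Each batch costs $\mathcal{O}(1)$ time, and the number of batches is $\sum_e\lceil \ell_e/k\rceil\le\sum_e(\ell_e/k+1)=\mathcal{O}(MN/k+M+N)$, since the cells total $MN$ and there are $M+N-1$ anti-diagonals. For the delay: after the $\lceil \ell_e/k\rceil$ batches of anti-diagonal $e$ have been issued, the first batch of anti-diagonal $e+1$ needs the first batch of $e$, which is in memory $d'$ steps after being issued; hence we stall, for fewer than $d'$ steps, only when anti-diagonal $e$ has fewer than $d'$ batches, i.e.\ when $\ell_e<kd'$, and no further stall occurs inside an anti-diagonal. Since the $\ell_e$ rise to $\min(M,N)$ and then fall, only $\mathcal{O}(kd')$ anti-diagonals, those near the two corners, satisfy $\ell_e<kd'$, so the stalls together with the final $\mathcal{O}(d')$ to drain the pipeline contribute $\mathcal{O}(kd'^2)$, for a total of $\mathcal{O}(MN/k+M+N+kd'^2)$.

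I expect the delay accounting of the last step to be the main obstacle: one must argue carefully that, away from the $\mathcal{O}(kd')$ short anti-diagonals near the corners, feeding one anti-diagonal's batches already takes at least $d'$ steps, so that by the time anti-diagonal $e+1$ is reached every output of $e$ it consumes has already been written and the pipeline never empties. One also has to keep the per-batch bookkeeping from leaking more than constant factors into the bound: the $\mathcal{O}(1)$-thick dependency boundary between stripes, a memory layout in which gathering a batch's $\mathcal{O}(1)$ predecessors costs $\mathcal{O}(1)$ \textsc{Copy} instructions rather than $\Theta(k)$ scattered reads, and the ragged ends absorbed by the geometric family of circuits.
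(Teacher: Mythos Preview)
Your proposal is correct and follows essentially the same approach as the paper: process anti-diagonals in order, compute $k$ cells at a time with a parallel circuit, and observe that stalls of at most $d'$ steps occur only on the $\mathcal{O}(kd')$ anti-diagonals shorter than $kd'$, giving the $\mathcal{O}(kd'^2)$ delay term. The one notable difference is in how the circuit's input is made contiguous: the paper stores the grid in antidiagonal-major layout, so that $k$ consecutive cells on an anti-diagonal automatically have their dependencies in $d$ contiguous runs, whereas you keep a stripe decomposition and assemble each batch's $\mathcal{O}(1)$ predecessor batches with \textsc{Copy}. Both realisations work and lead to the same bound; the paper's layout trick is slightly slicker, while your version is more explicit about the geometric family of circuits for ragged ends and about why no stall occurs inside a long anti-diagonal, points the paper leaves implicit.
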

\begin{proof}
Instead of storing the grid in a row-major or column-major layout, we store it in antidiagonal-major layout. That is, the antidiagonals are stored contiguously in memory.

Let $d$ be the number of cells that a cell depends on. We calculate the values of $k$ cells in parallel using a circuit. Notice that subsequent cells in the layout depend on at most $d$ contiguous sequences of cells (see figure~\ref{fig:dyn_dependency}). First, we move these cells to one contiguous chunk of memory. This can be used as input to a circuit which computes the values of the $k$ cells.

This process can be done repeatedly for all antidiagonals, computing values of all cells in the grid. Note that it is necessary to wait for the delay only if the antidiagonal that is being computed has $\mathcal{O}(kd')$ cells. The total waiting time is, therefore, $\mathcal{O}(kd'^2)$ and the total time complexity $\mathcal{O}\big(\frac{MN}{k} + M + N + kd'^2\big)$. \qed
\end{proof}
The circuit computing the cell value can be obtained either by using the general technique of simulating a RAM, which we show in Section~\ref{simulation}, or an ad-hoc circuit can be used.

\begin{figure}[h]
\centering\begin{tikzpicture}[x=12pt,y=12pt]
{
\def\cell(#1,#2);{
\draw[fill=gray](#1,#2) rectangle +(1,1);
\draw[pattern=crosshatch, pattern color=gray] (#1,#2)++(0,1) rectangle +(1,1);
\draw[pattern=north west lines, pattern color=gray] (#1,#2)++(-2,0) rectangle +(1,1);
\draw[thick,->] (#1,#2)++(0.5,0.5) -- ++(0,0.8);
\draw[thick,->] (#1,#2)++(0.5,0.5) -- ++(-1.8,0);
}
\draw[step=1] (-1.5,0.5) grid (5.5,6.5);
\cell(1,1);
\cell(2,2);
\cell(3,3);
\cell(4,4);
}
\end{tikzpicture}
\caption{Contiguous sequence of cells (grey) depend on a constant number (in this case two) of contiguous sequences of cells (hatched)}
\label{fig:dyn_dependency}
\end{figure}
\subsection{Recursive algorithms} \label{section:recursive}
In this section, we describe how the circuit can be used to speed up recursive algorithms in which the amount of work is increasing exponentially in the recursion level. This includes some algorithms whose time complexity can be analyzed by the Master theorem \cite{cit:alg_intro} as well as many exponential time algorithms. Our approach does not result in a substantial increase of memory usage. 

We use the circuit to solve the problem for small input sizes. We do this by using the concept of totally computing circuits, which we introduce. Totally computing circuit is a circuit that computes a given family of functions for all inputs of length bounded by some constant. This is a natural concept in PCRAM since we often want to use the circuit to compute a function on arbitrary input that is short enough and not only for inputs of a specific length.

\begin{definition}[Totally computing circuit] \label{def:total}
Let $\{f_k\}_{k\in \mathbb{N}}$ be a family of functions such that $f_k : \{0, 1\}^k \rightarrow \{0, 1\}^k$ for every $k$. 
Boolean circuit $C_n$ is a totally computing circuit of order $n$ of sequence $\{f_k\}_{k\in \mathbb{N}}$ if it has $n + \ceil{\log_2 n}$ inputs, interpreting the first $\ceil{\log_2 n}$ bits of input as a binary representation of number $n'$ and it outputs on the first $k$ outputs the value of function $f_{n'}$ evaluated on the next $n'$ bits of the input.
\end{definition}

A similar definition could be made, where the circuit would work in ternary logic with one of the values meaning ``no input'', requiring that the input is contiguous and starts at the first I/O node. We stick with the definition mentioned above because this notion of totally computing circuits can be easily simulated by totally computing circuits according to Definition~\ref{def:total}.

\begin{definition}[$(f,G)$-recursive algorithm with independent branches]
We call an algorithm $A$ in word-RAM model to be $(f,G)$-recursive with independent branches if it is of the form $A(0) = 0$ (the recursion base case) and $A(x) = f(A(g_1(x)), \dots, A(g_k(x)))$ otherwise, where the functions $f$ and $g_i$ have no side effects, can use randomness and $G = \{g_i\}_{i=1}^k$ is the family of functions $g_i$.
\end{definition}

\begin{theorem}\label{th:PCRAM_and_master}
Let $R$ be an $(f,G)$-recursive algorithm with independent branches for some $f,G$ in which the time spent on recursion level $i$ increases exponentially with $i$. Let $\mathcal{C_R}$ be a family of circuits computing the same function as $R$ such that the circuits in $\mathcal{C_R}$ can be generated by a Turing machine in time polynomial in their size. Then there exists an algorithm in the PCRAM that runs in time equal to the total time spent by $R$ on problems of size smaller than $k \eqdef \min(\text{size}^{-1}_\mathcal{C_R}(G), K)/w$, where $K$ is the greatest number such that $2K + \ceil{\log_2 K} \leq I$, plus $\text{del}_\mathcal{C_R}(kw)$. 
\end{theorem}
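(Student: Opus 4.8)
\noindent The plan is to run the recursion of $R$ on the word-RAM only down to the point where a subproblem becomes small enough to fit the circuit module, and to evaluate every such small subproblem in amortised $\mathcal{O}(1)$ time by pipelining calls to a single circuit that \emph{totally} computes $R$. Write $N:=kw$. In the preprocessing phase, using the hypothesis that the circuits of $\mathcal{C_R}$ are producible by a Turing machine in time polynomial in their size, we build a totally computing circuit $C$ of order $N$ for the family of functions computed by $R$ (if $R$ uses randomness, its random bits are supplied through additional input nodes). The delicate point here --- and, I expect, the only genuinely delicate point in the whole argument --- is that $C$ must respect the two hard limits of the model: at most $G$ gates and at most $I$ I/O nodes. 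A totally computing circuit of order $N$ has $N+\ceil{\log_2 N}$ inputs and $N$ outputs, and since $N\le K$ with $2K+\ceil{\log_2 K}\le I$ these fit into $I$ pins; the factor $N\le\text{size}^{-1}_{\mathcal{C_R}}(G)$ built into the definition of $k$ is exactly what keeps the gate count at most $G$. Assembling $C$ from the circuits of $\mathcal{C_R}$ --- a decoder reading the $\ceil{\log_2 N}$ length bits to select the sub-circuit of the correct order, with the $N$-bit length-prefixed, padded data part routed to it --- is then routine, and the whole construction fits the polynomial-time budget of the preprocessing phase.

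The main program simulates $R$ in depth-first order. Call a recursive call a \emph{frontier} call when its argument has size at most $k$ words while its parent's argument is larger. Because the branches of $R$ are independent --- the $g_i$ see no recursive result --- the recursion skeleton down to the frontier calls can be generated top-down using no circuit output at all. In a first pass the program walks this skeleton in DFS order and, at each frontier call, copies the argument into an aligned input buffer in $\mathcal{O}(1)$ time with the \textsc{Copy} instruction (legitimate since $N=\mathcal{O}(I)$), prefixes the $\ceil{\log_2 N}$-bit length, issues one $\textsc{RunCircuit}$ for $C$ whose output goes to a dedicated slot indexed by the call's rank in the DFS order, and moves on \emph{without} waiting. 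After the last frontier call has been issued the program waits once, for $\text{del}_{\mathcal{C_R}}(N)$ steps; issuing all the calls before consuming any result is precisely what lets the latency be paid a single time rather than once per frontier call. In a second pass the program walks the skeleton again, now bottom-up, evaluating each $f$ from the values of its children --- frontier children read from their slots, larger children already computed --- which is a faithful evaluation of $R$, so the output is correct (and, under randomness, correctly distributed). This pass needs only the ordinary recursion-stack memory of $R$; the sole extra space is the frontier buffer of $\mathcal{O}(Fw)$ bits, where $F$ is the number of frontier calls, which the analysis below shows is $\mathcal{O}$ of the running time.

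For the running time, let $T$ be the total non-recursive time $R$ spends at calls whose argument exceeds $k$ words. The first pass performs exactly $R$'s evaluations of the $g_i$ at those calls plus $\mathcal{O}(1)$ per frontier call, i.e. $\mathcal{O}(T+F)$ steps; the second pass performs $R$'s evaluations of the $f$ at those calls, a further $\mathcal{O}(T)$ steps; the latency is incurred once. Each frontier call is a child of a call of size exceeding $k$, each such parent has only the constantly many children coming from the $g_i$ and already costs $\Omega(1)$, so $F=\mathcal{O}(T)$ and the algorithm runs in $\mathcal{O}(T)+\text{del}_{\mathcal{C_R}}(N)$. Finally, because the time $R$ spends on recursion level $i$ grows geometrically in $i$, the sum $T$ over the levels with subproblem size exceeding $k$ is dominated by its last term and is therefore $\Theta$ of the time $R$ spends on the first recursion level whose subproblems already have size at most $k$ --- that is, of the time $R$ spends on subproblems of size just below $k$, which is the quantity named in the statement; together with the single term $\text{del}_{\mathcal{C_R}}(kw)$ this yields the claimed bound. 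The one non-mechanical ingredient is the circuit construction of the first paragraph, so that is where the care should go; everything else is book-keeping on the recursion tree.
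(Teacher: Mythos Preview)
Your proof is correct and follows essentially the same two-phase strategy as the paper: run the recursion down to the frontier while issuing pipelined circuit calls, wait once for the delay, then run the recursion again reading the buffered results. You supply more detail than the paper does on building the totally computing circuit and on why the number $F$ of frontier calls is $\mathcal{O}(T)$, but the structure is the same.

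One remark on your last paragraph: your analysis correctly lands on $\mathcal{O}(T)+\text{del}_{\mathcal{C_R}}(kw)$ with $T$ the time $R$ spends on subproblems of size \emph{greater} than $k$, and the paper's own proof states exactly this. Your subsequent attempt to identify $T$ with ``the total time spent by $R$ on problems of size smaller than $k$'' via the first sub-$k$ level is not quite right as written---under exponential growth the total over \emph{all} sub-$k$ levels is dominated by the base-case level, not the first sub-$k$ level---but this is a wording issue in the theorem statement rather than a flaw in your argument; the paper's proof itself concludes with the ``more than $k$'' quantity.
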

\begin{proof}

	Notice that $k$ is the maximum subproblem size which can be solved using the circuit.
	We run a recursion similar to $R$ while using the circuit for solving subproblems of size at most $k$. 
	The algorithm has two phases and it uses a buffer to store results from the first phase.
	
	In the \emph{first phase}, we compute results of subproblems of size at most $k$. We run $R$ with the modification that results of subproblems of size at most $k$ are computed using the circuit and stored in the buffer.
	
	In the \emph{second phase}, we compute results of subproblems of size more than $k$. We run $R$ with the modification that when subproblem size is at most $k$, we obtain the result computed in phase 1 from the buffer and return it.

\medskip
	Both phases take asymptotically same amount of time which is spent in algorithm $R$ on subproblems of size more than $k$. The second phase might have to wait for the delay of the circuit before its execution.\qed
	
\end{proof}

The memory consumption of the algorithm can be high, as we store all the results for subproblems of size at most $k$ in the buffer.	
This can be avoided by running both phases in parallel with the second phase delayed by the number of time-steps equal to the delay caused by the circuit. Then, at any moment, the number of items in the buffer does not exceed the delay of the circuit and it is, therefore, enough to use a buffer of size equal to the delay of the circuit.

\begin{corollary}
Let $R$ be a recursive algorithm with independent branches whose time complexity can be analysed by the Master theorem, running in time $t(n)$. If the time spent on recursion level $i$ of $R$ is increasing exponentially in $i$, then there is an equivalent algorithm $R'$ in the PCRAM model that runs in time $\mathcal{O}(t(n/k) + \text{del}(C_k))$ where $C_k$ is the circuit used do solve subproblems of size at most $k$.
\end{corollary}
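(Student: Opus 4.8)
\medskip\noindent
The plan is to obtain the corollary as a direct instantiation of Theorem~\ref{th:PCRAM_and_master}, the only real work being to rephrase the running time it guarantees in Master-theorem terms. First I would produce the circuit family $\mathcal{C}_R$ that Theorem~\ref{th:PCRAM_and_master} requires: since $R$ is a side-effect-free word-RAM algorithm, the standard simulation of a time-bounded RAM by a circuit (the construction of Section~\ref{simulation}) gives, for each input length, an equivalent circuit of size polynomial in $R$'s running time and constructible by a Turing machine in time polynomial in its size --- exactly the hypothesis Theorem~\ref{th:PCRAM_and_master} asks for. The assumptions of the corollary (independent branches, and work per recursion level growing exponentially in the level) are precisely the assumptions of Theorem~\ref{th:PCRAM_and_master}, so it applies and yields a PCRAM algorithm equivalent to $R$ whose running time is $O(W) + \text{del}_{\mathcal{C}_R}(kw)$, where $k = \min(\text{size}^{-1}_{\mathcal{C}_R}(G), K)/w$, with $K$ the largest integer with $2K + \ceil{\log_2 K} \le I$, and $W$ is the total time $R$ spends on subproblems of size greater than $k$. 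The circuit used on the size-$\le k$ subproblems is exactly $C_k$, so $\text{del}_{\mathcal{C}_R}(kw) = \text{del}(C_k)$ and the second term of the claimed bound is already there; what remains is to show $W = O(t(n/k))$.

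For that step I would observe that the part of $R$'s recursion on subproblems of size $> k$ has the same branching structure as the full recursion of an instance of size $n/k$: if $t(n) = a\,t(n/b) + f(n)$ is the recurrence the Master theorem is applied to, then $W = \Theta(t_k(n))$ where $t_k$ obeys the same recurrence but with the base case raised from a constant up to $k$, i.e. $t_k(m) = O(1)$ for $m \le k$. The exponential-per-level hypothesis is exactly Case~1 of the Master theorem, so $t_k(n)$ is dominated by its deepest retained level; that level consists of $\Theta((n/k)^{\log_b a})$ subproblems, and $(n/k)^{\log_b a} = \Theta(t(n/k))$ by the Master theorem applied to size $n/k$. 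Combining these gives $W = O(t(n/k))$ and hence the stated bound.

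The step I expect to be the main obstacle is the last one, and specifically the accounting of the combine cost $f$ at that deepest retained level: each of the $\Theta(t(n/k))$ subproblems of size $\Theta(k)$ there still performs its combining step, so a naive estimate gives $O(f(k)\,t(n/k))$ rather than $O(t(n/k))$. The two agree precisely when $f(k) = O(1)$, which is the relevant case for the dynamic-programming and divide-and-conquer recurrences the corollary targets, where ``size'' is counted in machine words and a combining step costs $O(1)$ once the children's values are available; in full generality the honest bound is $O(f(k)\,t(n/k) + \text{del}(C_k))$, and pinning down the cleanest hypothesis under which this collapses to the stated form --- rather than dragging the extra $f(k)$ factor through the applications --- is the delicate point. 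Two routine checks round out the argument: that the simulation of Section~\ref{simulation} really does yield a polynomial-size, polynomial-time-constructible family (so that Theorem~\ref{th:PCRAM_and_master} is applicable), and that ``equivalent'' means ``computes the same output'', which is immediate since the PCRAM algorithm of Theorem~\ref{th:PCRAM_and_master} is assembled from circuits computing the same function as $R$.
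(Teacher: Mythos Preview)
Your approach is exactly the paper's: the corollary is stated with no proof, intended as an immediate consequence of Theorem~\ref{th:PCRAM_and_master}, and you correctly invoke that theorem together with the RAM-simulation of Section~\ref{simulation} to supply $\mathcal{C}_R$.

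You are in fact more careful than the paper, and the obstacle you flag is real. For a Case-1 Master recurrence $t(n)=a\,t(n/b)+f(n)$ with $f(n)=\Theta(n^c)$ and $c<\log_b a$, the work the truncated recursion does on subproblems of size $>k$ is
\[
\sum_{i=0}^{\log_b(n/k)-1} a^i f(n/b^i)\;=\;\Theta\big((n/k)^{\log_b a}\cdot k^c\big)\;=\;\Theta\big(t(n/k)\cdot f(k)\big),
\]
so the honest bound is $\mathcal{O}(f(k)\,t(n/k)+\text{del}(C_k))$, not $\mathcal{O}(t(n/k)+\text{del}(C_k))$. Your proposed escape hatch, $f(k)=O(1)$, does collapse the two --- but it does \emph{not} hold for the very examples the paper cites immediately after the corollary (Karatsuba and Strassen both have linear combine cost), so that resolution is too narrow. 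The discrepancy is a looseness in the paper's statement rather than an error in your reasoning; your derivation from Theorem~\ref{th:PCRAM_and_master} is correct, and your identification of the extra $f(k)$ factor is the right diagnosis.
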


Such problems include the Karatsuba algorithm \cite{cit:karatsuba} and the Strassen algorithm \cite{cit:strassen}. Other algorithms which are not analyzed by master theorem but can be sped up using this technique include some exponential recursive algorithms for solving $3$-SAT, maximum independent set, as well as other exponential algorithms based on the branch-and-reduce method \cite{cit:exp_survey,cit:alg_dom_set}.

Note that it may be possible to avoid using totally computing circuits by ensuring that the subproblems which are solved using the circuit have all the same size or only have constant number different sizes.

\paragraph{\textbf{Simulation of RAM or a Turing machine}} \label{simulation} can be used to get totally computing circuits. We say that a circuit of order $n$ simulates given Turing Machine (word-RAM program), if it computes the same function as the Turing machine (word-RAM program) for all inputs of size $n$.

\begin{observation}
Let $A$ be an algorithm that runs on RAM or Turing machine and computes function $f(x)$, where $x$ is the input. Then for any $n$, there is in an algorithm $A'$ which takes the first $\ceil{\log_2 n}$ inputs as the binary representation of $n'$ and computes the function $f$ on next $n'$ inputs. Moreover, there is such $A'$ that has the same asymptotic complexity as $A$.
\end{observation}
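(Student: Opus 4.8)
The plan is to obtain $A'$ as a trivial wrapper around $A$. Fix $n$ and write $\ell = \ceil{\log_2 n}$; this $\ell$ is a constant that $A'$ may hard-code. The algorithm $A'$ first reads the $\ell$ leading input cells and reconstructs from them the integer $n'$, and then simulates $A$ on the block of $n'$ cells immediately following, i.e.\ on input cells $\ell+1,\dots,\ell+n'$. The content of the proof is just to verify that the simulation in the two machine models incurs no asymptotic overhead.

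For the word-RAM version, the simulation is immediate: every instruction of $A$ that reads the $i$-th input cell is replaced in $A'$ by an instruction reading the $(i+\ell)$-th cell, and since $\ell$ is a fixed constant the address translation $i \mapsto i+\ell$ is a single RAM operation; all other instructions of $A$ are copied verbatim. Decoding $n'$ from the $\ell$ leading bits costs $\mathcal{O}(\ell/w)$, which under the standard assumption $w = \Omega(\log n)$ is $\mathcal{O}(1)$. Hence $A'$ runs in time $T_A + \mathcal{O}(1)$, where $T_A$ is the running time of $A$, so the two have the same asymptotic complexity.

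For the Turing-machine version, $A'$ begins by moving its input head $\ell$ cells to the right, copying the bits it passes onto a dedicated work tape, and decoding $n'$ from it; this bootstrapping costs $\mathcal{O}(\ell)=\mathcal{O}(\log n)$ steps. It then runs $A$ step for step, using $A'$'s input tape as $A$'s virtual input tape and fresh tapes of $A'$ as $A$'s work tapes. Because $A'$'s input head already sits $\ell$ cells to the right of the position $A$'s head would occupy, and $A'$ replays every head motion of $A$, the offset $\ell$ is preserved automatically and no per-step cost is added. The total overhead is the additive $\mathcal{O}(\log n)$ of the bootstrapping phase, which does not change the asymptotic complexity (and is in fact dominated by $T_A$ whenever $f$ genuinely depends on all of its input, since then $A$ runs in time $\Omega(n')$).

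I expect no real obstacle here; the only points deserving care are the address/offset bookkeeping --- a one-line remark on the RAM (a constant offset added at each input access) and the ``replay the head moves'' observation on the Turing machine --- and checking that the $\mathcal{O}(\log n)$ bootstrapping cost is absorbed into ``same asymptotic complexity''.
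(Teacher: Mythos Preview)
The paper states this as an observation and supplies no proof; it is treated as self-evident. Your construction --- decode $n'$ from the leading $\ell=\ceil{\log_2 n}$ bits and then run $A$ on the following $n'$ cells, handling the shift by a constant address offset on the RAM and by replaying head moves on the Turing machine --- is correct and is exactly the natural argument.

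One small caution in the Turing-machine case: in the intended application (totally computing circuits, Definition~\ref{def:total}) the cells beyond position $\ell+n'$ may carry arbitrary bits rather than blanks, so if $A$ detects the end of its input by scanning for a blank, merely replaying $A$'s head motions on $A'$'s input tape is not quite enough. The standard fix --- have $A'$ first copy the $n'$ relevant cells to a fresh tape and run $A$ there --- costs an additional $\mathcal{O}(n')$ steps, which is absorbed whenever $A$ reads its entire input, so your conclusion is unaffected.
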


This observation can be used as a basis for creation of totally computing circuits as it is sufficient to simulate the algorithm $A'$ to get a totally computing circuit.

The following theorems say that it is possible to simulate a Turing machine or RAM on a circuit. For proof of Theorem~\ref{thm:RAM_simul}, see the appendix~\ref{app_a}. Theorem~\ref{thm:TM_simul} is well known and its proof can be found in lecture notes by \citet{cit:simulate_tm}.

\begin{theorem} \label{thm:RAM_simul}
A Word RAM with word size $w$ running in time $t(n)$ using $m(n)$ memory can be simulated by a circuit of size $\mathcal{O}(t(n)m(n)w\log(m(n)w))$ and depth $\mathcal{O}(t(n) \log (m(n)w))$.
\end{theorem}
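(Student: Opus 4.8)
The plan is to simulate the RAM by unrolling its computation in time: build a single \emph{step circuit} $S$ that maps an encoding of one RAM configuration to the encoding of the next, and then chain $t(n)$ copies of $S$ in series. A configuration is encoded by the $m(n)$ memory words ($m(n)w$ bits), the $\mathcal{O}(1)$ registers ($\mathcal{O}(w)$ bits) and the program counter ($\mathcal{O}(1)$ bits, since the program has fixed constant length), so each configuration occupies $\Theta(m(n)w)$ bits. The overall circuit takes the $n$-bit input, places it in the appropriate memory cells while the remaining configuration bits are fixed to their initial constants (realized by gates with no inputs, exactly as in the proof of Theorem~\ref{th:copy}), and after the last copy of $S$ reads the answer off the designated output cells. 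This clearly computes the same function as the RAM on all inputs of length $n$.

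For the step circuit $S$: decoding the program counter costs $\mathcal{O}(1)$ gates because the program has constant size, and its output selects which instruction's effect is applied. A memory read at an address held in a register is a word-wide multiplexer over the $m(n)$ cells, i.e.\ $\mathcal{O}(m(n)w)$ gates and depth $\mathcal{O}(\log m(n))$; indirect addressing costs nothing extra since the address is just a register value. The arithmetic and logical operations act on $\mathcal{O}(1)$ many $w$-bit words and can be done by circuits of $\mathcal{O}(w\log w)$ gates and depth $\mathcal{O}(\log w)$, which is dominated by the memory cost since $m(n)\ge 1$. A memory write of value $v$ to address $a$ is handled cell by cell: for each cell $i$ compute the bit $[i=a]$ (an equality test of $\mathcal{O}(\log m(n))$-bit numbers, depth $\mathcal{O}(\log\log m(n))$) and let the new content of cell $i$ be $v$ if $[i=a]$ and its old content otherwise (an $\mathcal{O}(w)$-gate, constant-depth multiplexer); this is $\mathcal{O}(m(n)w)$ gates and depth $\mathcal{O}(\log(m(n)w))$. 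Updating registers and the program counter takes $\mathcal{O}(w)$ further gates. Under the standard word-RAM assumption $w\ge\log m(n)$ we obtain $\mathrm{size}(S)=\mathcal{O}(m(n)w)$ and $\mathrm{del}(S)=\mathcal{O}(\log(m(n)w))$.

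Finally, I convert $S$ into a synchronous circuit by padding every gate-to-gate path up to the circuit's depth with identity gates (and likewise padding so that all outputs sit at the common depth, as required by the paper's notion of synchronicity). This multiplies the gate count by at most $\mathrm{del}(S)=\mathcal{O}(\log(m(n)w))$ and leaves the depth unchanged, so the synchronous version of $S$ has $\mathcal{O}(m(n)w\log(m(n)w))$ gates and depth $\mathcal{O}(\log(m(n)w))$, with all outputs at equal depth. Composing $t(n)$ such equal-depth synchronous blocks in series is again synchronous, with total size $\mathcal{O}(t(n)\,m(n)w\log(m(n)w))$ and total depth $\mathcal{O}(t(n)\log(m(n)w))$, which is the claimed bound.

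The points that require care are that random access, on \emph{both} the read and the write side, is implemented by logarithmic-depth multiplexer/demultiplexer trees together with shallow equality tests, so that one simulated step has depth only $\mathcal{O}(\log(m(n)w))$ rather than something polynomial in $m(n)$; and that the synchronization padding is charged correctly, since it is the sole source of the extra $\log(m(n)w)$ factor in the size bound and is affordable precisely because the per-step depth is logarithmic. Bounding the ALU subcircuits is a minor issue, as any reasonable implementation is dominated by the $\Theta(m(n)w)$-size memory-access logic.
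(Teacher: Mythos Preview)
Your proof is correct and follows essentially the same approach as the paper: unroll the computation into $t(n)$ single-step layers, implement each layer's memory access with logarithmic-depth multiplexer/demultiplexer logic over the $m(n)w$ configuration bits, and then pad with identity gates to obtain a synchronous circuit at the cost of an extra $\log(m(n)w)$ factor in size. The only cosmetic differences are that you model registers explicitly and synchronize each step block before composing, whereas the paper hardcodes the program into the circuit (inferring operand addresses directly from the instruction pointer), handles indirect addressing by chaining two memory-read subcircuits, and applies the synchronization padding globally at the end; neither choice affects the bounds.
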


\begin{theorem} \label{thm:TM_simul}
A Turing Machine running in time $t(n)$ using $m(n)$ memory can be simulated by a synchronous circuit of size $\mathcal{O}(t(n)m(n))$ and depth $\mathcal{O}(t(n))$.
\end{theorem}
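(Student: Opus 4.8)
The plan is to build the standard computation tableau of the Turing machine and realize it as a grid of constant-size transition gadgets. First I would lay out the computation as a $t(n) \times m(n)$ array of cells, where cell $(i,j)$ encodes, in a constant number of bits, the contents of tape position $j$ immediately after step $i$ together with a flag indicating whether the head occupies position $j$ and, if so, the current control state. Row $0$ encodes the initial configuration: the first $n$ cells carry the input bits $x_1, \dots, x_n$ fed from the input nodes, the remaining cells are fixed to the blank symbol by constant gates, and the head flag and starting state are placed at the initial cell.

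The key step is a locality observation: because the head moves by at most one cell per step and a tape symbol can change only when the head is present, the content of cell $(i,j)$ is a fixed function of just the three cells $(i-1,j-1)$, $(i-1,j)$, $(i-1,j+1)$ of the previous row. Since the tape alphabet and the set of control states are constants independent of $n$, this transition function has constant-width input and output, hence is computed by a circuit of constant size and constant depth, which I call the transition gadget. At the left and right boundaries the missing neighbor is replaced by a constant ``blank, no head'' signal. I would then assemble the whole circuit from $t(n)\,m(n)$ copies of this gadget wired according to the grid; each gadget contributes $\mathcal{O}(1)$ gates, giving total size $\mathcal{O}(t(n)\,m(n))$, and each row adds only the constant depth of one gadget, so after $t(n)$ rows the depth is $\mathcal{O}(t(n))$. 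The machine's output is read off from the final row.

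The remaining point, and the one I expect to require the most care, is synchronicity in the strong sense used in this paper (all paths into any gate of equal length and all output nodes at a common depth). I would first pad each transition gadget internally with identity gates so that it is itself synchronous and every copy has exactly the same depth $c$; because the gadget is constant-size this costs only a constant factor and does not affect the asymptotic bounds. Then, by induction on $i$, every cell of row $i$ sits at depth $ic$ from the input nodes, so all paths into any gadget of row $i$ have equal length; taking all output nodes from the final row places them at the common depth $c\,t(n)$. The one subtlety to verify here is that the boundary feeds and the blank constants of row $0$ do not create shorter paths into later gadgets, which is handled by routing them through the same identity-padded structure. This yields a synchronous circuit of size $\mathcal{O}(t(n)\,m(n))$ and depth $\mathcal{O}(t(n))$, as claimed.
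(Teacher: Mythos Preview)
Your proof is correct and is the standard computation-tableau construction. Note, however, that the paper does not actually supply its own proof of this theorem: it states the result as well known and defers to the lecture notes of Ko\v{z}en (cited as \texttt{cit:simulate\_tm}). Those notes present precisely the argument you give --- a $t(n)\times m(n)$ grid of constant-size local transition gadgets exploiting the locality of a single Turing-machine step --- so your approach coincides with the intended one, and your additional care about padding gadgets to enforce the paper's strong synchronicity condition is appropriate and sound.
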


\section{Lower Bounds}\label{section:lower_bounds}
In this section, we use the I/O model of computation to obtain several lower bounds for the PCRAM model.
\subsection{I/O model}
The computer in the I/O model is a word-RAM \cite{cit:word_ram} with two levels of memory -- cache and main memory. Both the main memory and the cache are partitioned into memory blocks. A block consists of $B$ aligned consecutive words. The processor can only work with data stored in the cache. If a word that is accessed is not in the cache, its whole block is loaded into the cache from the main memory.  When a word which is already in the cache is accessed, the computer does not access the main memory. Only $M$ words fit into the cache. When the cache is full and a block is to be loaded into it, a block which is stored in the cache has to be evicted from it. This is done according to a specified eviction strategy, usually to evict the least recently used item (abbreviated as LRU).

The \emph{I/O complexity} is defined to be the function $f$ such that $f(n)$ is equal to the maximum number of memory blocks transferred from the main memory to the cache over all inputs of length $n$.

A \emph{cache-aware algorithm} is an algorithm in the word-RAM model with the knowledge of $B$ and $M$. The measure of I/O complexity is usual for cache-aware algorithms.

\subsection{Lower bounds}
There are several lower bounds known on the I/O complexity of cache-aware algorithms. We show a general theorem which gives lower bounds in PCRAM from lower bounds in the cache-aware model. Recall that the set of circuits $\mathcal{C}$ is generated by a polynomial algorithm in word-RAM. Throughout this section, we use $t_{\text{gen}}(G, I)$ to denote the time complexity of this program.

\begin{theorem}
	Let $A$ be an algorithm in the PCRAM model running in time $t(n,G,I,w)$ where $n$ is the input size. Then there is a cache-aware algorithm $A'$ in the I/O model, simulating algorithm $A$ with I/O complexity $\mathcal{O}(t(n,M,Bw,w)+t_{\text{gen}}(M, Bw))$.
\end{theorem}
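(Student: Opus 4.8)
The plan is to turn the I/O-model computer into a simulator for a Pipelining Circuit RAM in which the \emph{cache} plays the role of the circuit-execution module. Concretely, I would instantiate the PCRAM parameters by $G := M$ (number of available gates $\leftrightarrow$ number of cache words) and $I := Bw$ (number of I/O pins $\leftrightarrow$ number of bits in one block); with this identification all the resource-dependent quantities of $A$ become exactly the ones appearing in the statement, so the target bound reads "I/O complexity $= \mathcal{O}(\text{number of simulated PCRAM steps} + t_{\text{gen}}(M, Bw))$". The simulation has two phases mirroring the two parts of a PCRAM program: a preprocessing phase that produces the circuit family $\mathcal{C}$ (and writes $M$, $Bw$ into the simulated memory where $A$ expects $G$, $I$), and a main phase that executes the modified word-RAM program of $A$ step by step.

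For the preprocessing phase I would simply run $A$'s circuit-generating word-RAM program inside the I/O model. It runs in time $t_{\text{gen}}(G, I) = t_{\text{gen}}(M, Bw)$, and since every word-RAM instruction reads and writes $\mathcal{O}(1)$ words and hence touches $\mathcal{O}(1)$ blocks, this phase costs $\mathcal{O}(t_{\text{gen}}(M, Bw))$ block transfers. At the end of it I store $\mathcal{C}$ in main memory and then load the adjacency-list descriptions of all circuits of $\mathcal{C}$, together with a scratch array large enough to hold the gate values of the largest single circuit, into a fixed region of the cache that the simulator keeps pinned for the rest of the run; because $\mathcal{C}$ has at most $G = M$ gates in total, these data occupy $\mathcal{O}(M)$ words and loading them is $\mathcal{O}(M/B)$ transfers, dominated by $t_{\text{gen}}(M, Bw)$.

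In the main phase I simulate $A$ instruction by instruction, maintaining a ring-buffer queue of pending circuit outputs together with their scheduled delivery times so as to reproduce the delay semantics of $\textsc{RunCircuit}$. An ordinary word-RAM instruction, the $\textsc{Copy}$ instruction (which moves $\mathcal{O}(I) = \mathcal{O}(Bw)$ bits, i.e.\ $\mathcal{O}(B)$ words), and the random-assignment instruction each touch $\mathcal{O}(B)$ words and so cost $\mathcal{O}(1)$ block transfers. To simulate $\textsc{RunCircuit}(i, s, t)$ I read the $\le I = Bw$ input bits (a region of $\mathcal{O}(1)$ blocks, since $B$ consecutive words span at most two blocks), evaluate $C_i$ layer by layer using the pinned description and pinned scratch array — which requires no further block transfers because everything needed is already resident — and enqueue the result with delivery time "now plus $\text{del}(C_i)$"; at every simulated step I also dequeue and write out any outputs that have come due, $\mathcal{O}(1)$ transfers each, letting a later write win whenever two deliveries or writes collide (the PCRAM model leaves such cells undefined, so any choice faithfully realizes some legal run). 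Thus each simulated PCRAM step costs $\mathcal{O}(1)$ block transfers, the main phase costs $\mathcal{O}(t(n, M, Bw, w))$, and adding the preprocessing phase gives the stated bound.

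The crux — and the only place that needs real care — is the claim that every $\textsc{RunCircuit}$ is served with $\mathcal{O}(1)$ block transfers even though a single circuit may be almost as large as the whole cache. This rests on two facts: the circuits are frozen after phase 1, and their total gate count is $\le G = M$, so both their descriptions and a gate-value scratch array fit in $\mathcal{O}(M)$ cache words; one then needs an eviction policy that pins those blocks (legitimate, since the I/O model does not mandate LRU and a cache-aware algorithm may manage the cache explicitly) while leaving a constant number of blocks free to service ordinary memory accesses and the output queue. Making room for that free part forces a constant-factor shrink of $G$ and $I$, so strictly speaking the simulator runs $A$ at parameters smaller by a constant factor; for the intended lower-bound applications this only rescales $M$ and $B$ by constants and is harmless, but to get the statement verbatim one must additionally argue that $t$ is insensitive to constant changes of its $G$ and $I$ arguments, or tighten the packing. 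The remaining bookkeeping — keeping the delivery queue to $\mathcal{O}(1)$ amortized transfers per step and matching the concurrent-write semantics — is routine.
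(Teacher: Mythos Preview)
Your proposal is correct and follows essentially the same approach as the paper: instantiate $G=M$, $I=Bw$, keep the (frozen) circuit descriptions resident in cache, and simulate each PCRAM instruction---ordinary word-RAM, \textsc{Copy}, or \textsc{RunCircuit}---with $\mathcal{O}(1)$ block transfers, paying $t_{\text{gen}}$ once up front. The paper is terser: it handles cache residency via a charging argument (reload costs are charged to the instruction that evicted the block) rather than explicit pinning, does not spell out the delay-queue bookkeeping you add, and disposes of the constant-factor shrink you flag by invoking the at-most-polynomial dependence of $t$ and $t_{\text{gen}}$ on $G$.
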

Note that since $B$ and $M$ is the number of words of cache-line and cache, respectively, they have $Bw$ and $Mw$ bits, respectively.
\begin{proof}
	We simulate algorithm $A$ using $\mathcal{O}(1)$ amortized I/O operations per instruction. We can achieve this by simulating the circuits of $A$ in the cache. Instructions of PCRAM model that do not work with circuits can be trivially performed with $\mathcal{O}(1)$ I/O operations. There exists $c<1$ such that synchronous circuits with at most $cM$ gates and $Bw$ I/O nodes can be simulated in the cache. At the beginning of the simulation, we generate the set of circuits $\mathcal{C}$ and load them into the cache. This takes time $t_{\text{gen}}(M, Bw)$. Any time we simulate circuit execution, we charge the I/O operations necessary for loading the circuit into the cache to the operation which caused eviction of the blocks which have to be loaded. This proves a bound of $\mathcal{O}(1)$ amortized I/O operation per simulated operation.

	We can, therefore, simulate $A$ with I/O complexity $\mathcal{O}(t(n, cM, Bw, w) + t_{\text{gen}}(cM, Bw))$. Since the speedup in the PCRAM model and $t_{\text{gen}}$ depend on the number of gates at most polynomially then $\mathcal{O}(t(n, cM, Bw, w)+ t_{\text{gen}}(cM, Bw)) =\mathcal{O}(t(n, M, Bw, w)+ t_{\text{gen}}(M, Bw))$. \qed
\end{proof}

\begin{corollary} \label{cor:low_bnd}
If there is a lower bound on a problem of $\Omega(t(n,M,B,w))$ in the cache-aware model, then the lower bound of $\Omega(t(n, G, I/w, w)-t_{\text{gen}}(G, I))$ holds in the PCRAM.
\end{corollary}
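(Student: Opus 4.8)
The plan is to derive the corollary by contradiction from the simulation theorem proved immediately above. Suppose a problem has a cache-aware lower bound $\Omega(t(n,M,B,w))$ but, contrary to the claim, admits a PCRAM algorithm $A$ running in time $t_A(n,G,I,w)$ with $t_A(n,G,I,w) = o\big(t(n,G,I/w,w)-t_{\text{gen}}(G,I)\big)$. By the preceding theorem, $A$ can be simulated by a cache-aware algorithm $A'$ in the I/O model whose I/O complexity is $\mathcal{O}\big(t_A(n,M,Bw,w)+t_{\text{gen}}(M,Bw)\big)$. So the entire argument is really just: specialise the theorem's generic parameters to the cache parameters and check that the resulting bound beats the assumed cache-aware lower bound.

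The next step is the parameter translation. In the I/O model a block holds $B$ words, that is $Bw$ bits, so the PCRAM I/O-pin count $I$ (measured in bits) is instantiated as $Bw$, and hence $I/w$ becomes $B$, while the gate budget $G$ is instantiated as the cache size $M$. Substituting $G \mapsto M$ and $I \mapsto Bw$ into the assumed behaviour of $A$ gives $t_A(n,M,Bw,w) = o\big(t(n,M,B,w)-t_{\text{gen}}(M,Bw)\big)$, and therefore $t_A(n,M,Bw,w)+t_{\text{gen}}(M,Bw) = o\big(t(n,M,B,w)\big)$, provided we read the bound in the natural regime in which $t_{\text{gen}}$ is dominated by the main term (otherwise the claimed PCRAM bound is vacuous). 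Plugging this into the I/O complexity of $A'$ shows that $A'$ solves the problem with $o\big(t(n,M,B,w)\big)$ I/O operations, contradicting the cache-aware lower bound.

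Hence no such $A$ can exist, which is exactly the asserted PCRAM lower bound $\Omega\big(t(n,G,I/w,w)-t_{\text{gen}}(G,I)\big)$. The only points requiring care — and thus the main obstacle, such as it is — are (i) the unit bookkeeping between the two models, namely that $I$ counts bits whereas $B$ counts words, so that the correct instantiation is $I \mapsto Bw$ and $I/w \mapsto B$; and (ii) the treatment of the additive $-t_{\text{gen}}$ term, i.e.\ making explicit that the statement is only non-trivial when circuit generation is not the dominant cost, so that the $o(\cdot)$ passes cleanly through the subtraction. Everything else is a one-line application of the theorem, so I would keep the write-up correspondingly short.
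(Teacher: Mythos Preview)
Your proposal is correct and is precisely the argument the paper intends: the corollary is stated without proof as the immediate contrapositive of the preceding simulation theorem, and your parameter substitution $G \mapsto M$, $I \mapsto Bw$ (hence $I/w \mapsto B$) together with the handling of the additive $t_{\text{gen}}$ term is exactly what is needed. There is nothing to add.
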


Some lower bounds in the cache-aware model hold even if the cache can have arbitrary content independent of the input at the start of the computation. Such lower bounds in the cache-aware model imply a lower bound of $\Omega(t(n, G, I/w, w))$ in the PCRAM. This is the case for the lower bound used in Corollary \ref{cor:predecessor_lb}.

\medskip \noindent
The following lower bound follows from Corollary~\ref{cor:low_bnd} and a lower bound shown in a survey by \citet{cit:dem_survey}.

\begin{corollary} \label{cor:predecessor_lb}
In the PCRAM model, given a set of numbers, assuming that we can only compare them, the search query takes $\Omega(\log_{I/w} n)$ time. 
\end{corollary}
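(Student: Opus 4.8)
The plan is to derive this from Corollary~\ref{cor:low_bnd}, fed with the classical external-memory lower bound for comparison-based searching that appears in the survey of \citet{cit:dem_survey}. That bound says that answering a search (predecessor) query in an ordered set of $n$ elements, when the elements may only be accessed through comparisons, costs $\Omega(\log_{B+1} n)$ memory transfers in the I/O model, so that the $B$-tree layout is optimal. Its proof is information-theoretic: before the first transfer nothing is known about where the query key $q$ falls among the $n$ keys, and each transfer brings at most $B$ stored keys into the cache, against which $q$ can be compared; each transfer therefore narrows the set of surviving possibilities by a factor of at most $B+1$, so $\Omega(\log_{B+1} n)$ transfers are needed to pin down the answer. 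Crucially, this counting is unaffected by whatever the cache holds initially, as long as that content does not depend on the instance — such content carries no information about $q$ — so the bound is of the ``robust'' kind singled out in the paragraph just before this corollary.

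First I would restrict the PCRAM algorithm to the comparison model, meaning every circuit it uses branches only on comparisons of input keys; then the theorem preceding Corollary~\ref{cor:low_bnd} simulates it by a comparison-based algorithm in the I/O model, a single circuit execution reading $\mathcal{O}(I/w)$ words turning into $\mathcal{O}(1)$ memory transfers of $B = I/w$ words each — precisely the $B$-way branching power that the external-memory lower bound already budgets for. Substituting $M\mapsto G$ and $B\mapsto I/w$ (a cache line of $I/w$ words is $I$ bits, matching the $I$ I/O pins of the circuit module) into Corollary~\ref{cor:low_bnd} gives a PCRAM lower bound of $\Omega(\log_{I/w} n - t_{\text{gen}}(G,I))$.

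To delete the additive $t_{\text{gen}}(G,I)$ term I would use the robustness noted above. In the simulation one only needs the cache preloaded with the family of circuits $\mathcal{C}$, which depends solely on $G$ and $I$ and is therefore independent of the set of numbers and of the query; since, after the constant-factor shrinking of the parameters already used in the proof of the preceding theorem, $\mathcal{C}$ fits into the cache, we may stipulate that the cache starts out containing it, at no I/O cost and without generating it. The robust form of the search lower bound still applies, and the reduction now shows that no comparison-based PCRAM algorithm can answer a search query in $o(\log_{I/w} n)$ time, which is the claim.

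The step I expect to be the main obstacle is checking that the two models line up on the comparison restriction: one must make sure that ``branch on the output of a comparison circuit'' on the PCRAM side really does simulate to a legitimate sequence of comparisons on the I/O side, including when a single circuit performs many comparisons at once — these are paid for by the $B=I/w$ keys loaded by the corresponding transfers, so no cheating occurs — and that the robust version of the search lower bound from \citet{cit:dem_survey} is the one being invoked, so that preloading $\mathcal{C}$ is allowed. Once these points are settled, the statement is a direct substitution into Corollary~\ref{cor:low_bnd}.
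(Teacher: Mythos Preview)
Your proposal is correct and follows exactly the paper's approach: apply Corollary~\ref{cor:low_bnd} with the comparison-based search lower bound from \citet{cit:dem_survey}, and invoke its robustness to arbitrary input-independent initial cache content (as the paper notes in the paragraph immediately preceding the corollary) to eliminate the $t_{\text{gen}}(G,I)$ term. You supply considerably more detail than the paper's one-line justification, but the strategy and the key observation about preloading $\mathcal{C}$ are identical.
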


More lower bounds on cache-aware algorithms are known, but, like in the Corollary~\ref{cor:predecessor_lb}, they often make further assumptions about operations that we can perform, often requiring that we treat the input items as indivisible objects. Problems for which a non-trivial lower bound is known include sorting, performing permutation and FFT \cite{cit:book_low_bnd}.

\section{Open Problems}
There are many problems waiting to be solved. An important direction of possible future work is to implement the algorithms described in this paper and compare experimental results with predictions based on time complexities in our model. The soundness of the model of computation needs to be empirically verified.

There are many theoretical problems that are not yet solved, including the following:

\begin{open}[Fast Fourtier Transform]
Is there an asymptotically fast algorithm performing the Fast Fourier Transform in the PCRAM model?
\end{open}

\begin{open}[Data structures]
Are there any data structures in the PCRAM model with asymptotically faster queries than their word-RAM counterparts?
\end{open}

\bibliographystyle{splncsnat}
\renewcommand{\bibname}{References}
\bibliography{literature}

\begin{thebibliography}{27}
\providecommand{\natexlab}[1]{#1}
\providecommand{\url}[1]{\texttt{#1}}
\providecommand{\urlprefix}{}

\bibitem[{Ajtai et~al.(1983)Ajtai, Koml\'{o}s, and
  Szemer{\'e}di}]{cit:opt_sort_netw}
Ajtai, M., Koml\'{o}s, J., Szemer{\'e}di, E.: An 0(n log n) sorting network.
\newblock In: Proceedings of the Fifteenth Annual ACM Symposium on Theory of
  Computing. pp. 1--9. STOC '83, ACM, New York, NY, USA (1983),
  \urlprefix\url{http://doi.acm.org/10.1145/800061.808726}

\bibitem[{Alam(2009)}]{cit:genet_alg}
Alam, N.: Implementation of genetic algorithms in {FPGA}-based reconfigurable
  computing systems (2009),
  \urlprefix\url{https://tigerprints.clemson.edu/all_theses/618/?utm_source=tigerprints.clemson.edu%2Fall_theses%2F618&utm_medium=PDF&utm_campaign=PDFCoverPages}

\bibitem[{Batcher(1968)}]{cit:bitonic}
Batcher, K.E.: Sorting networks and their applications.
\newblock In: Proceedings of the April 30--May 2, 1968, Spring Joint Computer
  Conference. pp. 307--314. AFIPS '68 (Spring), ACM, New York, NY, USA (1968),
  \urlprefix\url{http://doi.acm.org/10.1145/1468075.1468121}

\bibitem[{Che et~al.(2008)Che, Li, Sheaffer, Skadron, and
  Lach}]{cit:compare_GPU}
Che, S., Li, J., Sheaffer, J.W., Skadron, K., Lach, J.: Accelerating
  compute-intensive applications with {GPU}s and {FPGA}s.
\newblock In: 2008 Symposium on Application Specific Processors. pp. 101--107
  (June 2008)

\bibitem[{Chodowiec and Gaj(2003)}]{cit:FPGA_AES}
Chodowiec, P., Gaj, K.: Very compact {FPGA} implementation of the {AES}
  algorithm.
\newblock In: Walter, C.D., Ko{\c{c}}, {\c{C}}.K., Paar, C. (eds.)
  Cryptographic Hardware and Embedded Systems - CHES 2003. pp. 319--333.
  Springer Berlin Heidelberg, Berlin, Heidelberg (2003)

\bibitem[{Chrysos et~al.(2012)Chrysos, Sotiriades, Rousopoulos, Dollas,
  Papadopoulos, Kirmitzoglou, Promponas, Theocharides, Petihakis, Lagnel,
  Vavylis, and Kotoulas}]{cit:bio_paper}
Chrysos, G., Sotiriades, E., Rousopoulos, C., Dollas, A., Papadopoulos, A.,
  Kirmitzoglou, I., Promponas, V., Theocharides, T., Petihakis, G., Lagnel, J.,
  Vavylis, P., Kotoulas, G.: Opportunities from the use of {FPGA}s as platforms
  for bioinformatics algorithms.
\newblock In: 2012 IEEE 12th International Conference on Bioinformatics
  Bioengineering (BIBE). pp. 559--565 (Nov 2012)

\bibitem[{Cormen and Leiserson(2009)}]{cit:alg_intro}
Cormen, T.H., Leiserson, C.E.: Introduction to algorithms, 3rd edition.
\newblock The MIT Press (2009)

\bibitem[{Demaine(2002)}]{cit:dem_survey}
Demaine, E.: Cache-oblivious algorithms and data structures.
\newblock EEF Summer School on Massive Data Sets  (2002)

\bibitem[{Grozea et~al.(2010)Grozea, Bankovic, and
  Laskov}]{cit:GPU_compare_sort}
Grozea, C., Bankovic, Z., Laskov, P.: FPGA vs. Multi-core CPUs vs. GPUs:
  Hands-On Experience with a Sorting Application, pp. 105--117.
\newblock Springer Berlin Heidelberg, Berlin, Heidelberg (2010),
  \urlprefix\url{https://doi.org/10.1007/978-3-642-16233-6_12}

\bibitem[{Guo et~al.(2004)Guo, Najjar, Vahid, and Vissers}]{cit:why_faster}
Guo, Z., Najjar, W., Vahid, F., Vissers, K.: A quantitative analysis of the
  speedup factors of {FPGA}s over processors.
\newblock In: Proceedings of the 2004 ACM/SIGDA 12th International Symposium on
  Field Programmable Gate Arrays. pp. 162--170. FPGA '04, ACM, New York, NY,
  USA (2004), \urlprefix\url{http://doi.acm.org/10.1145/968280.968304}

\bibitem[{Hagerup(1998)}]{cit:word_ram}
Hagerup, T.: Sorting and searching on the word {RAM}.
\newblock In: Morvan, M., Meinel, C., Krob, D. (eds.) STACS 98. pp. 366--398.
  Springer Berlin Heidelberg, Berlin, Heidelberg (1998)

\bibitem[{Harper(1977)}]{cit:synch_circ}
Harper, L.H.: An $n \log n$ lower bound on synchronous combinational
  complexity.
\newblock Proceedings of the American Mathematical Society 64(2), 300--306
  (1977), \urlprefix\url{http://www.jstor.org/stable/2041447}

\bibitem[{Huffstetler(2018)}]{cit:intel_combined_xeon}
Huffstetler, J.: Intel processors and {FPGA}s-better together (May 2018),
  \urlprefix\url{https://itpeernetwork.intel.com/intel-processors-fpga-better-together/}

\bibitem[{Hussain et~al.(2011)Hussain, Benkrid, Seker, and
  Erdogan}]{cit:FPGA_k-means}
Hussain, H.M., Benkrid, K., Seker, H., Erdogan, A.T.: {FPGA} implementation of
  k-means algorithm for bioinformatics application: An accelerated approach to
  clustering microarray data.
\newblock In: 2011 NASA/ESA Conference on Adaptive Hardware and Systems (AHS).
  pp. 248--255 (June 2011)

\bibitem[{Karatsuba and Ofman(1962)}]{cit:karatsuba}
Karatsuba, A., Ofman, Y.: Multiplication of many-digital numbers by automatic
  computers.
\newblock In: Dokl. Akad. Nauk SSSR. vol. 145, pp. 293--294 (1962),
  \urlprefix\url{http://mi.mathnet.ru/dan26729}

\bibitem[{Karkooti et~al.(2005)Karkooti, Cavallaro, and
  Dick}]{cit:FPGA_matrix_inv}
Karkooti, M., Cavallaro, J.R., Dick, C.: Fpga implementation of matrix
  inversion using {QRD-RLS} algorithm.
\newblock In: Conference Record of the Thirty-Ninth Asilomar Conference
  onSignals, Systems and Computers, 2005. pp. 1625--1629 (October 2005)

\bibitem[{Ma et~al.(2014)Ma, Agrawal, and Chamberlain}]{cit:model_GPU}
Ma, L., Agrawal, K., Chamberlain, R.D.: A memory access model for
  highly-threaded many-core architectures.
\newblock Future Generation Computer Systems 30, 202 -- 215 (2014),
  \urlprefix\url{http://www.sciencedirect.com/science/article/pii/S0167739X13001349},
  special Issue on Extreme Scale Parallel Architectures and Systems,
  Cryptography in Cloud Computing and Recent Advances in Parallel and
  Distributed Systems, ICPADS 2012 Selected Papers

\bibitem[{Mahram(2013)}]{cit:bio_thesis}
Mahram, A.: {FPGA} acceleration of sequence analysis tools in bioinformatics
  (2013), \urlprefix\url{https://open.bu.edu/handle/2144/11126}

\bibitem[{Reed(2003)}]{cit:rand_tree_height}
Reed, B.: The height of a random binary search tree.
\newblock J. ACM 50(3), 306--332 (May 2003),
  \urlprefix\url{http://doi.acm.org/10.1145/765568.765571}

\bibitem[{Romoth et~al.(2017)Romoth, Porrmann, and Rückert}]{cit:uses_FPGA}
Romoth, J., Porrmann, M., Rückert, U.: {Survey of FPGA applications in the
  period 2000 – 2015 (Technical Report)} (2017)

\bibitem[{van Rooij and Bodlaender(2011)}]{cit:alg_dom_set}
van Rooij, J.M., Bodlaender, H.L.: Exact algorithms for dominating set.
\newblock Discrete Applied Mathematics 159(17), 2147 -- 2164 (2011),
  \urlprefix\url{http://www.sciencedirect.com/science/article/pii/S0166218X11002393}

\bibitem[{Sklavos(2015)}]{cit:RAM_speed}
Sklavos, D.: {DDR3 vs. DDR4}: Raw bandwidth by the numbers (Sep 2015),
  \urlprefix\url{https://www.techspot.com/news/62129-ddr3-vs-ddr4-raw-bandwidth-numbers.html}

\bibitem[{Strassen(1969)}]{cit:strassen}
Strassen, V.: Gaussian elimination is not optimal.
\newblock Numer. Math. 13(4), 354--356 (Aug 1969),
  \urlprefix\url{http://dx.doi.org/10.1007/BF02165411}

\bibitem[{Vitter(2008)}]{cit:book_low_bnd}
Vitter, J.S.: Algorithms and data structures for external memory.
\newblock Found. Trends Theor. Comput. Sci. 2(4), 54--63 (Jan 2008),
  \urlprefix\url{http://dx.doi.org/10.1561/0400000014}

\bibitem[{Vollmer(1999)}]{cit:circuit_book}
Vollmer, H.: Introduction to Circuit Complexity: A Uniform Approach.
\newblock Springer-Verlag, Berlin, Heidelberg (1999)

\bibitem[{Woeginger(2003)}]{cit:exp_survey}
Woeginger, G.J.: Combinatorial optimization - eureka, you shrink!
\newblock chap. Exact Algorithms for NP-hard Problems: A Survey, pp. 185--207.
  Springer-Verlag New York, Inc., New York, NY, USA (2003),
  \urlprefix\url{http://dl.acm.org/citation.cfm?id=885909.885927}

\bibitem[{Zwick and Gupta(1996)}]{cit:simulate_tm}
Zwick, U., Gupta, A.: Concrete complexity lecture notes, lecture 3 (1996),
  \urlprefix\url{www.cs.tau.ac.il/~zwick/circ-comp-new/two.ps}

\end{thebibliography}

\appendix 

\section{Simulation of Word-RAM} \label{app_a}
\setcounter{theorem}{10-1}
\begin{theorem}
A Word RAM with word size $w$ running in time $t(n)$ using $m(n)$ memory can be simulated by a circuit of size $\mathcal{O}(t(n)m(n)w\log(m(n)w))$ and depth $\mathcal{O}(t(n) \log (m(n)w))$.
\end{theorem}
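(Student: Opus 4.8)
The plan is the standard \emph{unrolling} construction: simulate the $t(n)$ steps of the RAM by $t(n)$ identical copies of a single \emph{transition circuit} chained in series, where the wires joining consecutive copies carry a complete encoding of the machine configuration. A configuration consists of the $m(n)$ memory words of $w$ bits each, the program counter, and the $\mathcal{O}(1)$ registers, i.e.\ $\Theta(m(n)w)$ bits in total. The $n$ input nodes of the circuit are routed into the cells holding the input in the initial configuration; the remaining memory cells and the registers are fed from constant ($0$/$1$) gates; and the output nodes are tapped from the designated output cells of the configuration produced after the last copy.

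First I would construct one transition circuit. It decodes the current program counter to select the next instruction; since the program is a fixed finite object, this is a constant-size subcircuit that exposes the opcode and operand fields. The word-level operations (addition, subtraction, shifts, bitwise operations, comparisons, and, if the model admits it, multiplication) act on $\mathcal{O}(1)$ registers of $w$ bits and are realised by textbook circuits of size $\text{poly}(w)$ and depth $\mathcal{O}(\log w)$ (prefix adders, barrel shifters, carry-save trees). The crucial ingredient is random memory access. A \textsc{load} from a register-held address is an $m(n)$-way word multiplexer controlled by the low $\lceil\log m(n)\rceil$ address bits; a balanced binary tree of word-wide $2$-to-$1$ selectors implements it with $\mathcal{O}(m(n)w)$ gates and depth $\mathcal{O}(\log m(n))$. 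A \textsc{store} must rewrite the whole memory bundle: a $\lceil\log m(n)\rceil\to m(n)$ decoder (size $\mathcal{O}(m(n)\log m(n))$, depth $\mathcal{O}(\log m(n))$) produces a one-hot selection vector, and for every word a $w$-bit multiplexer chooses between its old value and the value being stored according to the matching one-hot bit. Finally, the transition circuit evaluates the candidate next configuration for each instruction type in parallel and multiplexes among these $\mathcal{O}(1)$ candidates using the opcode; a \textsc{halt} (or an out-of-range program counter) makes the transition the identity so that later copies do not disturb a finished computation. Summing the pieces, one transition circuit has size $\mathcal{O}(m(n)w\log(m(n)w))$ -- the logarithmic factor comfortably absorbing the decoder and the word-level arithmetic -- and depth $\mathcal{O}(\log(m(n)w))$.

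Chaining $t(n)$ copies multiplies both measures by $t(n)$, yielding size $\mathcal{O}(t(n)m(n)w\log(m(n)w))$ and depth $\mathcal{O}(t(n)\log(m(n)w))$, as claimed; if a synchronous circuit is desired, padding each transition circuit with identity gates to equalise path lengths costs only a constant factor, since it already has depth $\Theta(\log(m(n)w))$. I expect the main obstacle to be the memory subcircuit: because one cannot determine statically which cell a given instruction reads or writes, the entire $\Theta(m(n)w)$-bit memory state has to be threaded through -- and, on a store, partially rewritten by -- every one of the $t(n)$ transition circuits, and this is precisely what forces the $t(n)m(n)$ term in the size. A secondary point to verify carefully is the bookkeeping: that the per-instruction candidate configurations are selected correctly, that the program-counter update handles jumps and conditional branches, and that reads of not-yet-written cells return the value prescribed by the RAM semantics.
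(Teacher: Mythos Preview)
Your proposal is correct and follows essentially the same construction as the paper: $t(n)$ layers, each simulating one RAM step via memory-read and memory-write subcircuits of size $\mathcal{O}(m(n)w)$ and depth $\mathcal{O}(\log(m(n)w))$, with all instruction types evaluated in parallel and selected by the instruction pointer. The only minor imprecision is your claim that synchronisation ``costs only a constant factor''---in fact it is the $\mathcal{O}(\log(m(n)w))$ padding of the $\Theta(m(n)w)$ pass-through memory wires that supplies the logarithmic factor in the size bound, which is exactly how the paper accounts for it.
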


\begin{proof}
We first construct an asynchronous circuit. In the proof, we will be using the following two subcircuits for writing to and reading from the RAM's memory.

\emph{Memory read subcircuit} gets as input $nw$ bits consisting of $m(n)$ words of length $w$ together with a number $k$ which fits into one word when represented in binary. It returns the $k$'th group. There is such circuit with $\mathcal{O}(m(n)w)$ gates and depth $\mathcal{O}(\log (m(n)w))$.

\emph{Memory write subcircuit} gets as input $m(n)w$ bits consisting of $m(n)$ words of length $w$ and additional numbers $k$ and $v$, both represented in binary, each fitting into a word. The circuit outputs the $m(n)w$ bits from input with the exception of the $k$'th word, which is replaced by value $v$. There is such circuit with $\mathcal{O}(m(n)w)$ gates in depth $\mathcal{O}(\log w)$.

\medskip \noindent
The circuit consists of $t(n)$ layers, each of depth $\mathcal{O}(\log (m(n)w))$. Each layer executes one step of the word-RAM. Each layer gets as input the memory of the RAM after execution of the previous instruction and the instruction pointer to the instruction which is to be executed and outputs the memory after execution of the instruction and a pointer to the next instruction. Each layer works in two phases.

In the \emph{first phase}, we retrieve from memory the values necessary for execution of the instruction (including the address where the result is to be saved, in case of indirect access). We do this using the memory read subcircuit (or possibly two of them coupled together in case of indirect addressing). This can be done since the addresses from which the program reads can be inferred from the instruction pointer.

In the \emph{second phase}, we execute all possible instruction on the values retrieved in phase 1. Note that all instructions of the word-RAM can be implemented by a circuit of depth $\mathcal{O}(\log w)$. Each instruction has an output and optional wires for outputting the next instruction (which is used only by jump and conditional jump -- all other instructions will output zeros). The correct instruction can be inferred from the instruction pointer, so we can use a binary tree to get the output from the correct instruction to specified wires. This output value is then stored in memory using the memory store subcircuit.

The first layer takes as input the input of the RAM. The last layer outputs the output of the RAM.

Every signal has to be delayed for at most $\mathcal{O}(\log(m(n)w))$ steps. The number of gates is, therefore, increased by a factor of at most $\mathcal{O}(\log(m(n)w))$. \qed
\end{proof}

\end{document}